\newtheorem{definition}{Definition}
\newtheorem{theorem}{Theorem}
\newtheorem{lemma}{Lemma}
\newtheorem{corollary}{Corollary}
\theoremstyle{remark}
\newtheorem{remark}{Remark}
\newcommand{\defeq}{\stackrel{\mathsmaller{\mathsf{def}}}{=}}
\newcommand{\E}{\text{E}}
\renewcommand{\geq}{\geqslant}
\renewcommand{\ge}{\geqslant}
\renewcommand{\leq}{\leqslant}
\renewcommand{\le}{\leqslant}
\newcommand{\lb}{\left}
\newcommand{\rb}{\right}
\newcommand{\shortestPath}{\textsf{shortestPath}}
\newcommand{\boundary}{\mathcal{D}}
\newcommand{\good}{\textsf{Good}}
\newcommand{\goodtree}{\textsf{GoodTL}}
\newcommand{\treelike}{\textsf{TreeLike}}
\newcommand{\byz}{\textsf{Byz}}
\newcommand{\ball}{\mathcal{B}}
\newcommand{\diam}{\text{diam}}
\def\polylog{\operatorname{polylog}}
\def\poly{\operatorname{poly}}
\date{}
\title{Byzantine-Resilient Counting in Networks}
\author{Soumyottam Chatterjee \and Gopal Pandurangan \and Peter Robinson}
\begin{document}

\maketitle
\thispagestyle{empty}
\begin{abstract}

We present two distributed algorithms for the {\em Byzantine counting problem}, which is concerned with estimating the size of a network in the presence of a large number of Byzantine nodes.

In an $n$-node network ($n$ is unknown), our first algorithm, which is {\em deterministic}, finishes in $O(\log{n})$ rounds  and is time-optimal. This algorithm can tolerate up to $O(n^{1 - \gamma})$ arbitrarily (adversarially) placed Byzantine nodes for any arbitrarily small (but fixed) positive constant $\gamma$. It outputs a (fixed) constant factor estimate of $\log{n}$ that would be known to all but $o(1)$ fraction of the good nodes. This algorithm works for \emph{any} bounded degree expander network. However, this algorithms assumes that good nodes can send arbitrarily large-sized messages in a round.

Our second algorithm is {\em randomized} and most good nodes send only small-sized messages.\footnote{Throughout this paper, a small-sized message is defined to be one that contains $O(\log{n})$ bits in addition to at most a constant number of node IDs.} This algorithm  works in \emph{almost all} $d$-regular  graphs. It tolerates up to $B(n) = n^{\frac{1}{2} - \xi}$  (note that $n$ and $B(n)$ are unknown to the algorithm) arbitrarily (adversarially) placed Byzantine nodes, where $\xi$ is any arbitrarily small (but fixed) positive constant. This algorithm takes $O(B(n)\log^2{n})$ rounds and outputs a (fixed) constant factor estimate of $\log{n}$ with probability at least $1 - o(1)$. The said estimate is known to most nodes, i.e., $\geq  (1 - \beta)n$ nodes for any arbitrarily small (but fixed) positive constant $\beta$.

To complement our algorithms, we also present an impossibility result that shows that it is impossible to estimate the network size with any reasonable approximation with any non-trivial probability of success if the network does not have sufficient vertex expansion.

Both algorithms are the first such algorithms that solve Byzantine counting in sparse, bounded degree networks under very general assumptions. Both algorithms are fully local and need no global knowledge. Our algorithms can be used for the design of efficient distributed algorithms resilient against Byzantine failures, where the knowledge of the network size --- a global parameter --- may not be known a priori.\\

\emph{Keywords:} Byzantine counting, expander graphs, Byzantine faults, randomization, network size estimation.

\end{abstract}

\clearpage

\setcounter{page}{1}

%%%%%%%%%%%%%%%%%%%%%%%%%%%%%%%%%%%%%%%%%%%%%%%%%%%%%%%%%%%%%%%%%%%%%%%%%%%%%%%%%%%%%%%%%%%%%%%%%%%%%%%%%%%%%%%%%%
%%%%%%%%%%%%%%%%%%%%%%%%%%%%%%%%%%%%%%%%%%%%%%%%%%%%%%%%%%%%%%%%%%%%%%%%%%%%%%%%%%%%%%%%%%%%%%%%%%%%%%%%%%%%%%%%%%
%%%%%%%%%%%%%%%%%%%%%%%%%%%%%%%%%%%%%%%%%%%%%%%%%%%%%%%%%%%%%%%%%%%%%%%%%%%%%%%%%%%%%%%%%%%%%%%%%%%%%%%%%%%%%%%%%%
%%%%%%%%%%%%%%%%%%%%%%%%%%%%%%%%%%%%%%%%%%%%%%%%%%%%%%%%%%%%%%%%%%%%%%%%%%%%%%%%%%%%%%%%%%%%%%%%%%%%%%%%%%%%%%%%%%
%%%%%%%%%%%%%%%%%%%%%%%%%%%%%%%%%%%%%%%%%%%%%%%%%%%%%%%%%%%%%%%%%%%%%%%%%%%%%%%%%%%%%%%%%%%%%%%%%%%%%%%%%%%%%%%%%%

\section{Introduction} \label{sec:intro}

The recent surge in the popularity of decentralized peer-to-peer protocols has renewed the interest in achieving Byzantine fault-tolerance in sparse networks of untrusted participants. In this work, we study the fundamental problem of {\em Byzantine counting} where the goal is to estimate the number of nodes in a network in the presence of a large number of Byzantine nodes. We say that a node $u$ is \emph{good} or \emph{honest} if $u$ is not a Byzantine node. We assume that the Byzantine nodes are {\em arbitrarily} distributed in the network and that when a Byzantine node sends a message over an edge, it cannot fake its ID. We note that both of these assumptions are quite typical in the literature \cite{Dwork_1988, Upfal_1994, King_2006_FOCS, Augustine_2012, Augustine_2013_PODC, Augustine_2015_DISC}. Please refer to Section \ref{sec:model} for more details on the distributed computing model.

We focus on the Byzantine counting problem in the context of \emph{sparse} networks because of the following reasons.
\begin{enumerate}
    \item Peer-to-peer networks and most other large-scale, real-world networks happen to be sparse.
    \item In a $d$-regular network, if the degree $d$ is a non-constant function of $n$, e.g., if $d = \Theta(\log{n})$, then it might become trivial for a node to estimate $n$ from its knowledge of its own degree $d$.
\end{enumerate}

Essentially all known algorithms studied in the literature for solving problems like \emph{Byzantine consensus} and \emph{Byzantine leader election} in sparse networks require an underlying {\em expander graph}: the expansion property is \emph{needed} in tolerating a large number of Byzantine nodes. The vertex expansion of a graph $G = (V, E)$ on $n$ nodes is defined as
\begin{equation*}
    h(G)   =   \min_{0   <   |S|   \leq   \frac{n}{2}} \frac{|Out(S)|}{|S|}\text{,}
\end{equation*}
where $S$ is any subset of $V$ of size at most $\frac{n}{2}$ and $Out(S)$ is the set of neighbors of $S$ in $V \setminus S$. In particular, the seminal paper of Dwork et al.\ \cite{Dwork_1988}, which introduced and studied the problem of almost-everywhere Byzantine agreement in bounded degree graphs showed that such an agreement is achievable in \emph{almost all} $d$-regular graphs (i.e., all but a vanishingly small fraction of such graphs). We exploit the following fact in our current work: almost all $d$-regular graphs possess good expansion properties.

However, these algorithms assume knowledge of at least an estimate of the {\em size of the network} (in many cases, an estimate of the logarithm of the network size suffices) and related parameters such as the network diameter or the mixing time. In fact, the result of Dwork et al.\ assumes that all nodes know the global network topology. This suggests that it is non-trivial to design algorithms that work \emph{without} knowledge of these global network parameters in bounded-degree (or $d$-regular) expander networks. In such networks, nodes have a limited local view that is highly symmetric, and this enables Byzantine nodes to fake the presence (or absence) of parts of the network.

The goal of our algorithms is to guarantee that most of the honest (i.e., non-Byzantine) nodes obtain a good estimate of the network size. We note that obtaining ``almost-everywhere'' knowledge is the best one can hope for in such networks \cite{Dwork_1988}. Byzantine counting is related to, yet different from, other fundamental problems in distributed computing, namely, {\em Byzantine agreement} and {\em Byzantine leader election}. Similar to the latter two problems, it involves solving a global problem under the presence of Byzantine nodes. However, it is a different problem, since protocols for Byzantine agreement or leader election do not necessarily yield a protocol for Byzantine counting. In fact, many existing algorithms for these two problems (discussed below and in Section \ref{sec:results}) assume knowledge of $n$, the number of nodes in the network. In sparse networks, they require at least a reasonably good estimate of $n$, typically a constant factor estimate of $\log{n}$ is needed and usually sufficient (as explained in Section \ref{sec:results}). Indeed, one of the main motivations for this paper is to design distributed protocols in sparse networks that can work with little or no global knowledge, including the network size. An efficient protocol for the Byzantine counting problem can serve as a preprocessing step for protocols for Byzantine agreement, leader election, and other problems that either require or assume knowledge of an estimate of $\log n$ \cite{Augustine_2016} (cf.\ Section \ref{sec:results}).

Byzantine agreement and leader election have been studied extensively for several decades. Dwork et al.\ \cite{Dwork_1988}, Upfal \cite{Upfal_1994}, and King et al.\ \cite{King_2006_FOCS} studied the Byzantine agreement problem in \emph{sparse (bounded-degree) expander networks} under the condition of \emph{almost-everywhere} agreement, where {\em almost} all (honest) processors need to reach agreement as opposed to \emph{all} nodes agreeing as required in the standard Byzantine agreement problem. Dwork et.\ al.\ \cite{Dwork_1988} showed how one can achieve almost-everywhere agreement under up to $\Theta(\frac{n}{\log{n}})$  of Byzantine nodes in a bounded-degree \emph{expander} network ($n$ is the network size). Subsequently, Upfal~\cite{Upfal_1994} gave an improved protocol that can tolerate up to a linear number of faults in a bounded degree \emph{expander} of \emph{sufficiently large spectral gap}. These algorithms required polynomial number of rounds in the CONGEST model (where honest nodes send only small-sized messages) and   required $O(\log n)$ rounds in the LOCAL model (where there is no restriction on the message sizes) and polynomial (in $n$) number of messages. (For a comparison, similarly, our \textsf{Local} algorithm takes $O(\log{n})$ rounds and our \textsf{Congest} algorithm takes polynomial number of rounds.) Moreover, for Upfal's algorithm  the local computation required by each processor is exponential. The work of King et al.\cite{King_2006_FOCS} was the first to study scalable (polylogarithmic communication and number of rounds, and polylogarithmic computation per processor) algorithms for Byzantine leader election and agreement. All of the above algorithms require knowledge of the network topology (including the knowledge of $n$) --- nodes need to have this information hardcoded from the very start.

The works of \cite{Augustine_2012}, \cite{Augustine_2013_PODC}, and \cite{Augustine_2015_DISC} studied stable agreement, Byzantine agreement, and Byzantine leader election (respectively) in dynamic networks (see also \cite{Augustine_2016}), where in addition to Byzantine nodes there is also adversarial churn. All these works assume that there is an underlying bounded-degree regular expander graph (in fact, Dwork et al. among others assume $d$-regular random graphs which are expanders with high probability) and {\em all nodes are assumed to have knowledge of $n$}. It was not clear how to estimate $n$ without additional information under presence of Byzantine nodes in such (essentially, regular and constant degree expander) networks. In fact, the works of \cite{Augustine_2016, Augustine_2015_DISC} raised the question of designing protocols in expander networks that work when the network size is not known and may even change over time, with the goal of obtaining a protocol that works when nodes have strictly local knowledge. This requires devising a distributed protocol that can measure global network parameters such as size, diameter, average degree, etc.\ in the presence of Byzantine nodes in sparse networks, especially in sparse \emph{expander} networks.

Motivated by the above considerations, the work of Chatterjee et al.\ \cite{Chatterjee_2019} studied the Byzantine counting problem in a {\em ``small-world''} expander network under the assumption that the Byzantine nodes are {\em randomly} distributed (cf. Section \ref{sec:technical} for more details). They present a distributed algorithm running in polylogarithmic (in $n$) rounds in the CONGEST model that can output a constant factor estimate of $\log{n}$, where $n$ is the (unknown) network size under the presence of $O(n^{1 - \gamma})$  Byzantine nodes, where $\gamma > 0$ can be be any arbitrarily small (but fixed) constant. While this presents the first known Byzantine counting algorithm under this setting, it has two major drawbacks.

First, it does not work when Byzantine nodes are \emph{arbitrarily distributed} --- it crucially needs that they be \emph{randomly distributed}.

Second, it does not work for (just) expander networks; it needs additional structure, namely a {\em small-world} network, i.e., a network that has a large clustering coefficient.\footnote{i.e., a Watts-Strogatz type network similar to \cite{Watts_1998, Barthelemy_1999}.} The work of Chatterjee et al. crucially relies on the small-world property in its estimation of the network size. Hence the algorithm and techniques used in that paper \cite{Chatterjee_2019} are \emph{not} directly applicable to the present paper. Indeed, this paper uses a different approach compared to that of \cite{Chatterjee_2019} (cf. Section \ref{sec:technical}). While prior works on Byzantine agreement and leader election required only (sparse) expander networks \cite{Dwork_1988, Upfal_1994, King_2006_FOCS} under an arbitrary distribution of Byzantine nodes, Chatterjee et al.\ remark that:

\begin{quote}
    ``... for the Byzantine counting problem, which seems harder, however, expansion by itself does not seem to be sufficient.''
\end{quote}

In this paper, we show that Byzantine counting can indeed be solved in expander networks and almost all $d$-regular graphs under arbitrarily (adversarially) placed Byzantine nodes. This is the setting that is typically assumed in prior works on Byzantine agreement and leader election problems (e.g., \cite{Dwork_1988, Upfal_1994, King_2006_FOCS, Augustine_2012, Augustine_2013_PODC, Augustine_2015_DISC}).

Throughout this paper, we use the following terminology.
\begin{enumerate}
    \item We use the terms \emph{sparse network} and \emph{bounded-degree network} synonymously --- each describing a network where the maximum degree of a node is bounded by a constant, and hence the number of edges is linear in the number of vertices.
    
    \item A \emph{small-sized message} is defined to be one that contains $O(\log{n})$ bits in addition to at most a constant number of node IDs.
    
    \item We use the term \emph{most nodes} or \emph{most good nodes} to indicate $\geq  (1 - \beta)n$ nodes, where $n$ is the total number of nodes in the network (and is an unknown quantity in the context of this paper) and $\beta$ is any arbitrarily small (but fixed) positive constant.
    
    \item By \emph{efficient algorithms} we mean algorithms that use small-sized messages and run in $\polylog{(n)}$ time.
\end{enumerate}
\subsection{Our Contributions} \label{sec:results}

We present two distributed algorithms for the {\em Byzantine counting problem}, which is concerned with estimating the size (more specifically, the logarithm of the size, as considered here) of a sparse network in the presence of a large number of Byzantine nodes.

Let the network be denoted by $G = (V, E)$; let $n = |V|$ denote the (unknown) network size. Our first algorithm is {\em deterministic} and finishes in $O(\log{n})$ rounds in the \textsf{LOCAL} model and is time-optimal. This algorithm can tolerate up to $O(n^{1 - \gamma})$ adversarially placed Byzantine nodes for any arbitrarily small (but fixed) positive constant $\gamma$. It outputs a constant factor estimate of $\log{n}$ that is known to all but $o(1)$ fraction of the good nodes. This algorithm works for \emph{any} bounded degree expander network.

Our second algorithm is {\em randomized}. This algorithm works in \emph{almost all} $d$-regular graphs (i.e., all but a vanishingly small fraction of such graphs). We note that this is the same model used in the seminal work of Dwork et al.\cite{Dwork_1988}. Our algorithm works in the CONGEST model, where  honest nodes  use only \emph{small-sized} messages (unlike the first algorithm). See Section~\ref{sec:model} for more details about the network model. It tolerates up to $B(n) = n^{\frac{1}{2} - \xi}$ adversarially placed Byzantine nodes, where $\xi$ is any arbitrarily small (but) fixed positive constant. This algorithm takes $O(B(n)\log^2{n})$ rounds (hence $o(\sqrt{n})$ rounds for $B(n) = n^{\frac{1}{2} - \xi}$) and outputs a constant factor estimate of $\log{n}$ with  probability at least $1-o(1)$. The said estimate is known to at least $  (1 - \beta)n$ nodes for any arbitrarily small positive constant $\beta$.

We note that similar to our result, many prior Byzantine protocols \cite{Dwork_1988, Berman_1993_MST, Berman_1993_DC, Upfal_1994} in bounded-degree networks take a polynomial number of rounds in the \textsf{Congest} model (where honest nodes are limited to small-sized messages). However, all these protocols assume knowledge of $n$, and in certain cases, even the entire network topology. A notable exception is the protocol of King et al.\ \cite{King_2006_FOCS} that takes a polylogarithmic number of rounds in the \textsf{Congest} model, but this protocol also requires knowledge of the entire network topology (including the value of $n$). On the other hand, these protocols tolerate a substantially larger number of Byzantine nodes, even up to $\Theta(n)$ Byzantine nodes. It is unknown whether one can achieve the same level of fault-tolerance for Byzantine counting.

To complement our algorithms, we also present an impossibility result that shows that it is impossible to estimate the network size (or the logarithm of it) with any reasonable approximation and with any non-trivial probability of success if the network does not have sufficient vertex expansion. This shows that the assumption of the expansion property of the network is \emph{necessary} for solving Byzantine counting.

Both our algorithms are the first such algorithms that solve Byzantine counting in sparse, bounded degree networks under very general assumptions: they are fully local and need no global knowledge. Our algorithms can serve as a building block for implementing other non-trivial distributed computational tasks in Byzantine networks such as agreement and leader election where the network size (or its estimate) is not known a priori.

\paragraph{Applying our counting protocols.} To illustrate, we consider the {\em Byzantine agreement} protocol of \cite{Augustine_2013_PODC} that applies to sparse bounded-degree expander networks. It applies even when the network is dynamic with adversarial churn, but the network size is assumed to be stable. This protocol uses two main ideas to solve binary agreement, where the requirement is that most good nodes should decide on a common value ($0$ or $1$) which should be an input value of a good node: (1) {\em random walks} to sample nodes uniformly at random from the network and (2) {\em a majority protocol} to converge to the correct value. Both ideas require knowledge of  $\log{n}$, in particular, a constant factor upper bound of $\log{n}$. For random walks, $O(\log{n})$ is the {\em mixing time}, which is needed for walks to converge to the stationary distribution in a bounded degree expander; nodes need to know an upper bound on the mixing time to ensure that only sufficiently ``mixed'' random walks are used for sampling. The majority protocol uses the following simple idea: In one iteration, each node samples two random nodes and updates its value to the majority value among the three values: its own value and the two other values. It is shown that $O(\log{n})$ iterations are needed to converge to the almost-everywhere agreement with high probability, provided the number of Byzantine nodes is bounded by $O(\sqrt{n})$.

It is important to note that the above protocol assumes knowledge of $c\log{n}$, for some constant $c > 1$. However, using the Byzantine counting protocol of this paper as a preprocessing step, the above assumption can be removed. The counting protocol ensures that most honest nodes have a constant factor estimate of $\log{n}$ (this constant is fixed in the analysis). Although the counting protocol does not guarantee that all (or most) honest nodes have the \emph{same} estimate of $\log n$, it is easy to ensure that most honest nodes have an estimate that is some constant factor larger than $\log{n}$. This estimate suffices to run the Byzantine agreement protocol of \cite{Augustine_2013_PODC}.
\subsection{Technical Challenges and Drawbacks of Previous Approaches} \label{sec:technical}

The main challenge is designing and analyzing distributed algorithms in the presence of Byzantine nodes in networks where the honest nodes have only local knowledge, i.e., knowledge of their immediate neighborhood. For example, in a constant degree regular network, a node's local view does not yield any information on the network size. It is possible to solve the counting problem exactly in networks \emph{without} Byzantine nodes by simply building a spanning tree and converge-casting the nodes' counts to the root, which in turn can compute the total number of nodes in the network. A more robust and alternate way that works also in the case of \emph{anonymous} networks is the technique of {\em support estimation} \cite{Augustine_2012, Augustine_2016} which uses \emph{exponential} distribution. Alternatively, one can use a geometric distribution (see e.g., \cite{Kutten_2015, Pandurangan_2019_Book, Newport_2018}) to accurately estimate the network size.

Consider the following simple protocol for estimating the network size that uses the geometric distribution. Each node $u$ flips an unbiased coin until the outcome is heads; let $X_u$ denote the random variable that denotes the number of times that $u$ needs to flip its coin. Then, nodes exchange their respective values of $X_u$ whereas each node only forwards the highest value of $X_u$ (once) that it has seen so far. We observe that $X_u$ is geometrically distributed and denote its global maximum by $\bar{X}$; it can be shown that $\bar{X} = \Theta(\log{n})$ with high probability and hence can be used to estimate $\log{n}$.

The geometric distribution protocol fails when even just one Byzantine node is present. Byzantine nodes can fake the maximum value or can stop the correct maximum value from spreading and hence can violate any desired approximation guarantee. The work of \cite{Chatterjee_2019} successfully adapts the geometric distribution to work for their purpose. However, their work \cite{Chatterjee_2019} assumes additional structural properties of the network --- they assume ``small-world'' networks, i.e., networks with constant expansion \emph{and} large clustering coefficient. The latter property implies that for every node, many of its neighbors are well-connected among themselves. The protocol of \cite{Chatterjee_2019} exploits this fact to detect fake values sent by Byzantine nodes. This protocol does not work for graphs that {\em only} have the expander property (which as we show in the impossibility result is needed to estimate the network size within a non-trivial factor). Hence a new approach is needed as shown in this paper.

The work of \cite{Chatterjee_2019} also assumes that the Byzantine nodes are \emph{randomly} distributed in the network. This assumption coupled with the fact that their number is only $O(n^{1-\gamma})$ (where $\gamma$ is any arbitrarily small, but fixed, positive constant), results in (with high probability) every honest node having a significant number of honest neighbors (the number of neighbors depends on $\gamma$). The algorithm of \cite{Chatterjee_2019} \emph{fails} to work for expander networks with arbitrary or adversarial Byzantine node distribution, which is typically assumed in previous works on Byzantine protocols \cite{Dwork_1988, Upfal_1994, King_2006_FOCS, Augustine_2012, Augustine_2013_PODC, Augustine_2015_DISC}.

Prior localized techniques that have been used successfully for solving other problems such as Byzantine agreement and leader election such as random walks and majority agreement (e.g., \cite{Augustine_2013_PODC, Augustine_2015_DISC}) do not imply efficient algorithms for Byzantine counting. For instance, random walk-based techniques crucially exploit a uniform sampling of tokens (generated by nodes) after $\Theta(\text{mixing time})$ number of steps. However, the main difficulty in this approach is that the mixing time is unknown (since the network size is unknown) --- and hence it is unclear a priori how many random walk steps the tokens should take. Similar approaches based on the return time of random walks fail due to long random walks having a high chance of encountering a Byzantine node.

One can also use ``birthday paradox'' ideas to try to estimate $n$, e.g., as in the work of \cite{Ganesh_2007} in a non-Byzantine setting. However it fails too in the Byzantine case.

We note that one can possibly solve Byzantine counting if one can solve Byzantine leader election, as observed in \cite{Chatterjee_2019}, however, all known algorithms for Byzantine leader election (or  agreement) {\em assume a priori knowledge (or at least a good estimate) of the network size}.  Hence we require a new protocol that solves Byzantine counting from ``scratch''. In our network model, where most nodes, with high probability, see (essentially) the same local topological structure (and constant degree) even for a reasonably large neighborhood radius (see Lemma \ref{lemma-most-nodes-are-locally-tree-like-preliminaries}), it is difficult for nodes to break symmetry or gain a priori knowledge of $n$.

We point out that with constant probability, in our network model, due to the property of the $d$-regular random graph, an expected constant number of nodes might have multi-edges --- this can potentially be used to break ties; however, this approach \emph{fails} with constant probability.

Another approach is to try to estimate the diameter of the network, which, being $\Theta(\log{n})$ for sparse expanders, can be used to deduce an approximation of the network size. Assuming that there exists a leader in the network, one way to do this is for the leader to initiate the flooding of a message and it can be shown that a large fraction of nodes (say a $(1 - \epsilon)$-fraction, for some small $\epsilon > 0$) can estimate the diameter by recording the time when they see the first token, since we assume a synchronous network. However, this method fails since it is not clear, how to break symmetry initially by choosing a leader --- this by itself appears to be a hard problem in the Byzantine setting without knowledge of $n$ (or an estimate of $\log{n}$).
\subsection{A High-level Description of Our Protocols}

We now give a high-level intuition behind our protocols. Our first protocol works for \emph{any} expander network as long as the nodes have knowledge of some lower bound on the expansion. The main idea is to show that honest nodes that have a sufficiently large distance from any of the Byzantine nodes will be able to detect any deviations in the network structure caused by Byzantine nodes. The honest nodes can accomplish that by checking the expansion of their $i$-hop neighborhood, for some $i = \Omega(\log{n})$. This algorithm is \emph{time-optimal} and runs in time proportional to the network diameter. However, it is designed for the \textsf{Local} model, as the expansion check requires nodes to send messages of polynomial size.

The second algorithm achieves Byzantine counting by ensuring that most good nodes will send only small-sized messages. The main idea here is the following. The algorithm proceeds in phases. In phase $i$, $i$ is the current estimate of $\log(n)$. In each $i$-hop neighborhood of some node consisting only of good nodes, there are likely to be $\Theta(i)$ nodes that are generating \emph{beacon messages}, which are propagated for at least $i$ rounds through the network.

Upon receiving a beacon message, a node assumes that the value of $i$ is not yet too large and hence proceeds without \emph{deciding}. On the other hand, the probability of any good node generating a beacon message becomes $\frac{1}{\poly{(n)}}$ once $i = \Omega(\log{n})$, and hence good nodes that do not observe a beacon message within $O(i)$ rounds of phase $i$, decide on $i$ as their estimate.

To avoid the scenario where Byzantine nodes simply keep generating new beacon messages (to falsely induce a larger network size), the algorithm implements a \emph{blacklisting mechanism} that uses properties of random regular graphs to prevent nodes from generating multiple beacon messages within the same phase. This ensures that the Byzantine nodes will be blacklisted if they attempt to generate fake beacon messages.
\subsection{Other Related Work} \label{sec:related}

There have been several works on estimating the size of the network, see e.g., the works of \cite{Ganesh_2007, Horowitz_2003, Luna_2014, Terelius_2012, Shafaat_2008}, but all these works do not work under the presence of Byzantine adversaries. There have been some work on using network coding for designing byzantine protocols (see e.g., \cite{Jaggi_2008}); but these protocols have polynomial message sizes and are highly inefficient for problems such as counting, where the output size is small. There are also some works on topology discovery problems under Byzantine setting (e.g., \cite{Nesterenko_2006}), but these do not solve the counting problem.

Several recent works deal with Byzantine agreement, Byzantine leader election, and fault-tolerant protocols in dynamic networks.  We refer to \cite{Guerraoui_2013, Augustine_2012, Augustine_2013_PODC, Augustine_2013_SPAA, Augustine_2015_DISC} and the references therein for details on these works. These works crucially assume the knowledge of the network size (or at least an estimate of it) and don't work if the network size is not known.

There has been significant work in designing peer-to-peer networks that are provably robust to a large number of Byzantine faults \cite{Fiat_2002, Hildrum_2003, Naor_2003, Scheideler_2005}. These focus only on (robustly) enabling storing and retrieving data items. The works of \cite{King_2006_FOCS, King_2014, King_2006_SODA} address the Byzantine agreement problem, and the work of  \cite{Guerraoui_2013} presents a solution for maintaining a clustering of the network. In particular, \cite{King_2014} use a spectral technique to ``blacklist'' malicious nodes leading to faster and more efficient Byzantine agreement. All these works assume a sufficiently good estimate of the network size; in particular, none of them solves the Byzantine counting problem in sparse networks.

The work of \cite{Bortnikov_2009} shows how to implement uniform sampling in a peer-to-peer system under the presence of Byzantine nodes where each node maintains a local ``view'' of the active nodes. We point out that the choice of the view size and the sample list size of $\Theta(n^{\frac{1}{3}})$ necessary for withstanding adversarial attacks requires the nodes to have a priori knowledge of a polynomial estimate of the network size. \cite{Horowitz_2003} considers a dynamically changing network \emph{without} Byzantine nodes where nodes can join and leave over time and provides a local distributed protocol that achieves a polynomial estimate of the network size.

In \cite{Bovenkamp_2012}, the authors present a gossip-based algorithm for computing aggregate values in large dynamic networks (but without the presence of Byzantine failures), which can be used to obtain an estimate of the network size. The work of \cite{Chlebus_2009} focuses on the consensus problem under crash failures and assumes knowledge of $\log{n}$, where $n$ is the network size. Lenzen et al.\ \cite{Lenzen_2017} study the synchronous counting problem under Byzantine nodes which is a different problem: the goal here is to synchronize pulses among correct nodes. They study the problem in a complete network, and hence the network size is trivially known.

\paragraph{Byzantine fault detection in the context of asynchronous distributed systems.} There have also been several works on Byzantine fault detection --- see, e.g., \cite{Alvisi_2001}, \cite{Kihlstrom_2003}, \cite{Haeberlen_2006}, and \cite{Greve_2012}. Alvisi et al.\ \cite{Alvisi_2001} consider the problem of fault detection in Byzantine \emph{quorum systems} and design statistical methods to compute the current number of failures at any point of time. Their model assumes the knowledge of $n$, the total number of servers, whereas their goal is also clearly different. There have also been works on Byzantine fault detectors \cite{Kihlstrom_2003, Haeberlen_2006}, but these assume the complete graph where the knowledge of $n$ becomes trivial.

Kihlstrom et al.\ \cite{Kihlstrom_2003} propose and analyze new classes of Byzantine fault detectors to solve the consensus problem in an asynchronous distributed system of $n$ processes, in which the number of (Byzantine-) faulty processors is strictly less than $\frac{n}{3}$. Haeberlen et al.\ \cite{Haeberlen_2006} proposes a new idea for Byzantine fault detection by achieving \emph{eventual strong completeness} where every faulty node is eventually blacklisted by every correct node. The underlying communication graph is a \emph{complete graph} in both the network models of \cite{Kihlstrom_2003} and \cite{Haeberlen_2006}, thus the knowledge of $n$ becomes immediate and trivial.

Greve et al.\ \cite{Greve_2012} design and analyze a powerful Byzantine failure detector that works in dynamic distributed systems, where both the number of processors and the topology of the communication graph can change from round to round. Their work does not assume any knowledge of $n$; however, their work does \emph{not} solve the Byzantine counting problem either --- \emph{no} estimate about the \emph{global} network size can be made during the execution of their algorithm.

\section{Computing Model and Problem Definition} \label{sec:model}

\paragraph{The distributed computing model.} We consider a synchronous network represented by a graph $G$ whose nodes execute a distributed algorithm and whose edges represent connectivity in the network. The computation proceeds in synchronous rounds, i.e., we assume that nodes run at the same processing speed (and have access to a synchronized clock) and any message that is sent by some node $u$ to its  neighbors in some round $r\ge 1$ will be received by the end of round $r$. We consider the \textsf{Local} model, where there is no restriction on the size of the messages that can be transmitted per edge per round, \cite{Pandurangan_2019_Book, Peleg_2000_Book}; but we point out that our second algorithm ensures that most good nodes send only small-sized messages.

As is usual \cite{Pandurangan_2019_Book, Peleg_2000_Book}, we assume local computation (within a node) is free and instantaneous.

\paragraph{Byzantine nodes.} Among the $n$ nodes ($n$ or its estimate is not known to the nodes initially), up to $B(n)$ can be \emph{Byzantine}. The Byzantine nodes have unbounded computational power and can deviate arbitrarily from the protocol. This setting is commonly referred to as the \emph{full information model}.

We say that a node $u$ is \emph{good} or \emph{honest} if $u$ is not a Byzantine node. Byzantine nodes are \emph{adaptive} --- they have complete knowledge of the entire states of all nodes at the beginning of every round (including random choices made by all the nodes), and thus can take the current state of the computation into account when determining their next action. The Byzantine nodes also know the future random choices of the honest nodes, i.e., the Byzantine nodes are \emph{omniscient}. We assume that the Byzantine nodes are {\em arbitrarily} distributed in the network and that when a Byzantine node sends a message over an edge, it cannot fake its id. We note that both of these assumptions are quite typical in the literature \cite{Dwork_1988, Upfal_1994, King_2006_FOCS, Augustine_2012, Augustine_2013_PODC, Augustine_2015_DISC}.

\paragraph{Distinct IDs.} We assume that all nodes (including the Byzantine nodes) have {\em distinct IDs}, chosen from an arbitrarily large set whose size is unknown a priori. In other words, the node IDs can be viewed as comparable black boxes that do not leak any information about the network size. We point out that this precludes most nodes from estimating $\log{n}$ by looking at the length of their IDs.

\paragraph{Network topology for the first (deterministic) algorithm.} Let $G = (V, E)$ be the graph representing the network. We assume $G$ to be a bounded-degree expander network. For the sake of a self-contained exposition, we recall the definition of \emph{vertex expansion} below.

\begin{definition}[Vertex expansion of a graph]
    The \emph{vertex expansion} of a graph $G = (V, E)$ on $n$ nodes is defined as
    \begin{equation*}
        h(G)   =   \min_{0   <   |S|   \leq   \frac{n}{2}} \frac{|Out(S)|}{|S|}\text{,}
    \end{equation*}
    where $S$ is any subset of $V$ of size at most $\frac{n}{2}$ and $Out(S)$ is the set of neighbors of $S$ in $V \setminus S$.
\end{definition}

We assume that the network graph $G$ has a constant \emph{vertex expansion} $\alpha > 0$, where $\alpha$ is a fixed positive constant.

\paragraph{Network topology for the second (randomized) algorithm.} Here we assume $G$ to be a random $d$-regular graph model ($d$ is a constant) that is constructed by the union of $\frac{d}{2}$ (assume $d \geq 8$ is an even constant) random Hamiltonian cycles of $n$ nodes. We call this random graph model the \emph{$H(n, d)$ random graph model}, also called the \emph{permutation model} (please refer to \cite{Chatterjee_2021_arXiv} for a detailed exposition of the $H(n, d)$ random graph model and its various properties). It is known that such a random graph is an expander (in fact a Ramanujan expander \cite{Friedman_1991, Law_2003}) with high probability. The $H(n, d)$  model is a well-studied and popular random graph model (see e.g., \cite{Wormald_1999}), and has been used as a model for peer-to-peer networks and self-healing networks \cite{Law_2003, Pandurangan_2014}.

We note that the usual $d$-regular random graph model is the model where a graph is selected with uniform probability among all (simple) $d$-regular graphs \cite{Wormald_1999}. Thus if one can show a result that holds with high probability in a $d$-regular random graph, then it holds for \emph{almost all} $d$-regular graphs (as in Dwork et al\cite{Dwork_1988}). Since it is hard to work directly with the above model, one usually works with the so-called \emph{configuration (or pairing) model} \cite{Bollobas_1981} that can be used to generate a $d$-regular random graph. The advantage of the configuration model is that if one can show a high probability bound on the configuration model, then this implies a similar bound for $d$-regular ($d$ is a constant) random graphs \cite{Dwork_1988}. The configuration model is closely related to the $H(n, d)$ (i.e., permutation) model, which is sometimes easier to work with compared to the configuration model. It was shown by Greenhill et al.\ \cite{Greenhill_2002} that an event that holds with probability at least $1 - o(1)$ in the configuration model also holds with probability $1 - o(1)$ in the $H(n, d)$ model and vice versa. Thus, the results that we show for the $H(n, d)$ model also hold for the configuration model with probability at least $1 - o(1)$. Therefore they also hold for $d$-regular random graphs with the same probability. Hence they hold for \emph{almost all} $d$-regular graphs.

\paragraph{Problem definition.} Since we assume a \emph{sparse} (constant bounded degree) network and a large number of Byzantine nodes, it is difficult to ensure that every honest node eventually knows an exact estimate of $n$. This motivates us to consider the following ``approximate, almost everywhere'' variant of counting:

\begin{definition}[Byzantine counting] \label{definition-problem-definition}
    Suppose that there are $B(n)$ Byzantine nodes in the network and let $\epsilon$ be an arbitrarily small (but fixed) positive constant. We say that an algorithm solves Byzantine Counting in $T$ rounds if the following properties hold in all runs:
    \begin{enumerate}
        \item Every honest node $u$ (irrevocably) decides on an estimate of $\log{n}$, denoted by $\mathcal{L}_u$, within $T$ rounds.
		\item There is a set $S$ of at least $(1 - \epsilon)n - B(n)$ honest nodes such that each $u \in S$ has a constant factor estimate of $\log{n}$; i.e., there are fixed constants $c_1, c_2 > 0$, such that
            \begin{equation*}
                c_1\log{n}  \leq  \mathcal{L}_u  \leq  c_2\log{n}\text{.}
            \end{equation*}
    \end{enumerate}
\end{definition}

\paragraph{Some terminology.} We recall the following terminology that are used throughout this paper.
\begin{enumerate}
    \item We use the terms \emph{sparse network} and \emph{bounded-degree network} synonymously --- each describing a network where the maximum degree of a node is bounded by a constant, and hence the number of edges is linear in the number of vertices.
    
    \item A \emph{small-sized message} is defined to be one that contains $O(\log{n})$ bits in addition to at most a constant number of node IDs.
    
    \item We use the term \emph{most nodes} or \emph{most good nodes} to indicate $\geq  (1 - \beta)n$ nodes, where $n$ is the total number of nodes in the network (and is an unknown quantity in the context of this paper) and $\beta$ is any arbitrarily small (but fixed) positive constant.
    
    \item By \emph{efficient algorithms} we mean algorithms that use small-sized messages and run in $\polylog{(n)}$ time.
\end{enumerate}

%%%%%%%%%%%%%%%%%%%%%%%%%%%%%%%%%%%%%%%%%%%%%%%%%%%%%%%%%%%%%%%%%%%%%%%%%%%%%%%%%%%%%%%%%%%%%%%%%%%%%%%%%%%%%%%%%%%%%%%%%%%%%%%%%%%%%%%%%%%%%%%%%%%%%%%%%%%%%%%
%%%%%%%%%%%%%%%%%%%%%%%%%%%%%%%%%%%%%%%%%%%%%%%%%%%%%%%%%%%%%%%%%%%%%%%%%%%%%%%%%%%%%%%%%%%%%%%%%%%%%%%%%%%%%%%%%%%%%%%%%%%%%%%%%%%%%%%%%%%%%%%%%%%%%%%%%%%%%%%
%%%%%%%%%%%%%%%%%%%%%%%%%%%%%%%%%%%%%%%%%%%%%%%%%%%%%%%%%%%%%%%%%%%%%%%%%%%%%%%%%%%%%%%%%%%%%%%%%%%%%%%%%%%%%%%%%%%%%%%%%%%%%%%%%%%%%%%%%%%%%%%%%%%%%%%%%%%%%%%

\section{Preliminaries}

We use the notation $\ball_G(u,i)$ to refer to the inclusive $i$-hop neighborhood of node $u$ in graph $G$ and we omit $G$ when it is clear from the context. For a set of nodes $S$, we define $\ball_G(S,i) = \bigcup_{u \in S}\ball_G(u,i)$.

Both of our algorithms make use of a structural result that shows that Byzantine nodes have a somewhat limited impact on \emph{most} good nodes in expander graphs.

\begin{lemma} \label{lem:goodNodes}
    Consider an $n$-node graph $G = (V, E)$ with maximum degree $\Delta = O(1)$ and vertex expansion $\alpha>0$. Let $\byz$, an arbitrary subset of $V$, denote the set of Byzantine nodes with the restriction that $|\byz| \leq n^{1-\gamma}$, where $\gamma$ is any arbitrarily small (but fixed) positive constant. Then, for any $o(n)$-sized $F  \subset  V$, there exists a set $\good \subseteq V\setminus F$ of good nodes such that
      \begin{align}
          |\good| \ge n - 2|F| - o(n). \label{eq:goodsize}
      \end{align}
    Moreover, for each $u \in \good$, the following hold:
    \begin{enumerate}
        \item $\ball(u,\lfloor\tfrac{\gamma}{2}\log_\Delta n\rfloor)$ does not contain any Byzantine nodes.

        \item Let $H$ be the subgraph induced by nodes in $\good$. Then, for any constant $c>0$ such that $|\ball_H(u,c\log n)| \le \tfrac{|\good|}{2}$, it holds that every vertex subset $S \subseteq \ball_H(u,c\log n)$ has a vertex expansion of $\ge \alpha'$ in graph $H$, for any fixed constant $\alpha'  <  \alpha$.
    \end{enumerate}
\end{lemma}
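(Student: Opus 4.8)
The plan is to build $\good$ by deleting from $V$ three kinds of nodes — the set $F$ itself, a thin neighborhood around the Byzantine set (to secure Part 1 and to strip out $\byz$), and a further \emph{pruned} set needed to repair the vertex expansion of the induced graph (to secure Part 2) — and then to show that only $2|F| + o(n)$ nodes are deleted in total.

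First I would dispose of Part 1 and of $\byz$ simultaneously. Let $r = \lfloor \tfrac{\gamma}{2}\log_\Delta n\rfloor$ and consider $\ball(\byz, r)$. Since the maximum degree is $\Delta$, each radius-$r$ ball has at most $\Delta^{r+1} \le \Delta\, n^{\gamma/2}$ nodes, so $|\ball(\byz, r)| \le |\byz|\,\Delta\, n^{\gamma/2} \le \Delta\, n^{1-\gamma/2} = o(n)$. Any node $u \notin \ball(\byz, r)$ satisfies $d(u,b) > r$ for every $b \in \byz$, so $\ball(u, r)$ is Byzantine-free — exactly Part 1. I therefore set $R = F \cup \ball(\byz, r)$; then $|R| \le |F| + o(n)$, removing $F$ guarantees $\good \subseteq V \setminus F$, and $V \setminus R$ contains only good nodes whose $r$-balls avoid $\byz$.

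Next, for Part 2 I would first reduce the ``every subset of every admissible ball expands'' statement to a single global claim: if $H := G[\good]$ has vertex expansion at least $\alpha'$ (i.e.\ every subset of size at most $|\good|/2$ expands by $\ge \alpha'$), then the hypothesis $|\ball_H(u,c\log n)| \le |\good|/2$ forces $|S| \le |\good|/2$ for every $S \subseteq \ball_H(u,c\log n)$, so $|Out_H(S)| \ge \alpha'|S|$ holds automatically. Thus it suffices to produce $\good \subseteq V \setminus R$ with $G[\good]$ of expansion $\ge \alpha'$. To do this I run the natural \emph{pruning} process on $G[V \setminus R]$: repeatedly delete any vertex set $S$ of the current graph with $|S|$ at most half the current order and current external boundary $|Out(S)| < \alpha'|S|$, until none remains; let $U$ be the union of all deleted pieces and put $\good = V \setminus (R \cup U)$. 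When the process halts, $G[\good]$ has no $\alpha'$-violating subset, i.e.\ it has vertex expansion $\ge \alpha'$, which yields Part 2.

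The heart of the argument — and the step I expect to be the main obstacle — is bounding $|U|$. The key is a charging argument. If $S_1, S_2, \dots$ are the deleted pieces in order and $U = \bigcup_i S_i$, then any node that survives the entire process and is adjacent to $U$ must, for the first piece $S_i$ it touches, lie in the external boundary of $S_i$ at the moment $S_i$ was removed; summing the deficient boundaries gives that the number of surviving neighbors of $U$ is at most $\sum_i |Out(S_i)| < \alpha'|U|$. Since every non-$U$ neighbor of $U$ lies either in $R$ or among these survivors, $|Out_G(U)| \le |R| + \alpha'|U|$. Applying the vertex expansion of $G$ to $U$ — valid once $|U| \le n/2$, which I would enforce by stopping the count at the first threshold crossing and using $|R| = o(n)$ — gives $\alpha|U| \le |R| + \alpha'|U|$, hence $|U| \le |R|/(\alpha-\alpha') = O(|F|) + o(n)$. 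Combining, $|V\setminus\good| = |R| + |U| \le 2|F| + o(n)$, which establishes \eqref{eq:goodsize}; every node of $\good$ is good with a Byzantine-free $r$-ball (Part 1) and $G[\good]$ has expansion $\ge \alpha'$ (Part 2). The two delicate points are the size bookkeeping that keeps $|U| \le n/2$ so that the expansion of $G$ applies, and tracking the constants so the deletion total is genuinely $2|F| + o(n)$ rather than a larger constant multiple of $|F|$ (the $o(n)$-sized Byzantine part contributes only $o(n)$ even after pruning, so the linear term is driven by $F$).
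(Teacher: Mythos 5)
Your proposal follows essentially the same route as the paper: the paper removes $F \cup \byz$, invokes its Expander Subgraph Lemma (Lemma~\ref{lem:culling}, proved in the appendix by exactly the prune-and-charge argument you re-derive inline, yielding the same coefficient $1 + \tfrac{1}{\alpha - \alpha'} = 1 + \tfrac{1}{\phi(1-c')}$ on $|F|$), and then strips off $\ball(\byz, \lfloor\tfrac{\gamma}{2}\log_\Delta n\rfloor)$, whose size is bounded by $|\byz|\Delta^{(\gamma/2)\log_\Delta n} = o(n)$ just as you compute. The only differences are cosmetic --- you prune after removing the Byzantine ball rather than before (arguably cleaner, since the paper's final removal step technically perturbs the already-certified expander $H$), and you surface explicitly the two bookkeeping issues (keeping $|U| \le n/2$ so the expansion of $G$ applies, and choosing the pruning parameter so the coefficient on $|F|$ is exactly $2$) that the paper handles inside the case analysis of Lemma~\ref{lem:culling} and via the choice $c' = 1 - 1/\phi$.
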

\begin{proof}
    Consider the set $\byz$ of Byzantine nodes. We first instantiate Lemma \ref{lem:culling} by removing the set $V(G) \setminus (\byz \cup F)$ from $G$, and thus obtain a connected subgraph $H$ of size $\ge n - o(n)$. By Lemma \ref{lem:culling}, $H$ contains at least
    \begin{center}
        $n - (|F| + |\byz|)\lb( 1 + \frac{1}{\phi(1-c')}\rb) = n - |F| \lb( 1 + \frac{1}{\phi(1-c')} \rb) - o(n)$
    \end{center}
    good nodes. Also, every one of its subsets of size $\geq  \frac{|H|}{2}$ has a vertex expansion of at least $\alpha'$, for any constants $\alpha'  <  \alpha$ and $c'  <  1$. Choosing $c' = 1 - \tfrac{1}{\phi}$, implies that $H$ contains $n - 2|F| - o(n)$ nodes, as required. Assuming a maximum degree of $\Delta$, we get $|\ball(\byz,j)| \le |\byz|\Delta^j$, for any $j\ge 0$.

    We observe that
    \begin{center}
        $\big|\ball(\byz,\lfloor\tfrac{\gamma}{2}\log_\Delta n\rfloor)\big| \le |\byz| \cdot \Delta^{(\gamma/2)\log_\Delta n} = n^{1 - \gamma/2}$.
    \end{center}

    It follows that the set $\good = V(H) \setminus \ball(\byz,\tfrac{\gamma}{2}\log_\Delta n)$ satisfies (1) and (2).
\end{proof}
\subsection{The ``locally tree-like'' property of an $H(n,d)$ random graph} \label{sec:treelike}

We refer to \cite{Chatterjee_2021_arXiv} for a detailed exposition of the $H(n, d)$ random graph model and its various properties. For the sake of completeness, we merely state the main definitions and lemmas needed here. The ``locally tree-like'' property of an $H(n,d)$ random graph says that for most nodes $w$, the subgraph induced by $B(w,r)$ up to a certain radius $r$ looks like a tree. More specifically, let $G$ be an $H(n,d)$ random graph and $w$ be any node in $G$. Consider the subgraph induced by $B(w,r)$ for $r  =  \frac{\log{n}}{10\log{d}}$. Let $u$ be any node in $Bd(w,j)$, $1 \leq j < r$. $u$ is said to be \emph{typical} if $u$ has only one neighbor in $Bd(w,j-1)$ and $(d-1)$-neighbors in $Bd(w,j+1)$; otherwise it is called \emph{atypical}.

\begin{definition}[Locally Tree-Like Property] \label{defn-locally-tree-like-node-preliminaries}
    We call a node $w$ \emph{locally tree-like} if no node in $B(w,r)$ is atypical. In other words, $w$ is locally tree-like if the subgraph induced by $B(w,r)$ is a $(d-1)$-ary tree.
\end{definition}

Using the properties of the $H(n, d)$ random graph model and standard concentration bounds, it can be shown that most nodes in $G$ are locally tree-like:
\begin{lemma} \label{lemma-most-nodes-are-locally-tree-like-preliminaries}
    In an $H(n,d)$ random graph, with high probability, at least $n - O(n^{0.8})$ nodes are locally tree-like.
\end{lemma}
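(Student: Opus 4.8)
The plan is to show that a fixed node $w$ fails to be locally tree-like only with probability $O(n^{-4/5})$, and then to convert this per-node bound into the claimed global statement via linearity of expectation together with Markov's inequality. First I would observe that, by Definition~\ref{defn-locally-tree-like-node-preliminaries}, $w$ is \emph{not} locally tree-like precisely when the breadth-first exploration of $\ball(w,r)$ (with $r = \tfrac{\log n}{10\log d}$) ever revisits an already-discovered vertex or reveals a multi-edge; that is, exactly when $\ball(w,r)$ contains a cycle. Since a perfectly tree-like ball is a $(d-1)$-ary tree of depth $r$, the number of vertices, and hence the number of edges, touched by the exploration is $|\ball(w,r)| \le d^{\,r+1} = d\,n^{1/10} = O(n^{1/10})$, because $d^{r} = d^{(\log n)/(10\log d)} = n^{1/10}$.

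Next I would bound the probability that the exploration creates such a collision. The key probabilistic input is the principle of deferred decisions applied to the $H(n,d)$ model: when the exploration reveals its $j$-th new edge, the other endpoint is, conditioned on the history, essentially uniform over the not-yet-committed vertices, so the probability that it lands inside the set of already-discovered vertices is $O(j/n)$. Summing over the at most $m = O(n^{1/10})$ edges examined, a union bound gives
\[
  \Pr[\,w \text{ not locally tree-like}\,] \;\le\; \sum_{j=1}^{m} O\!\br{\tfrac{j}{n}} \;=\; O\!\br{\tfrac{m^2}{n}} \;=\; O\!\br{\tfrac{n^{1/5}}{n}} \;=\; O\!\br{n^{-4/5}}.
\]
By linearity of expectation the expected number of non-tree-like nodes is $O(n \cdot n^{-4/5}) = O(n^{1/5})$, and Markov's inequality yields $\Pr[\,\#\{\text{non-tree-like nodes}\} > n^{0.8}\,] \le O(n^{1/5})/n^{0.8} = O(n^{-3/5}) = o(1)$. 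Hence with high probability at most $n^{0.8}$ nodes fail to be locally tree-like, which is exactly the claim that at least $n - O(n^{0.8})$ nodes are locally tree-like.

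I expect the main obstacle to be making the $O(j/n)$ collision bound rigorous, since the $d/2$ Hamiltonian cycles impose global constraints (each cycle must be a single $n$-cycle, and the two cycle-neighbors of a vertex are correlated), so the revealed endpoints are not literally independent and uniform. The cleanest way to discharge this is to carry out the exploration in the configuration (pairing) model, where each unmatched half-edge is paired uniformly at random and the collision probability at step $j$ is transparently $\le \tfrac{d\,|R|}{dn - 2(\text{matched half-edges})} = O(|R|/n)$ with $|R| = O(n^{1/10})$; the resulting $1-o(1)$ statement then transfers to the $H(n,d)$ model by the equivalence of Greenhill et al.\ recalled in Section~\ref{sec:model}. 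If a failure probability of $1/\poly(n)$ (rather than merely $o(1)$) were required, I would replace the final Markov step by a bounded-differences or second-moment argument, exploiting that the indicator ``$w$ is non-tree-like'' depends on only $O(n^{1/10})$ revealed edges and therefore concentrates well.
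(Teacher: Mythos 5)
The paper does not actually prove this lemma in the text you were given --- it explicitly defers the proof to the appendix of \cite{Chatterjee_2021_arXiv} --- so there is no in-paper argument to compare against; judged on its own, your proposal is the standard proof of the locally-tree-like property for random regular graphs and is sound: reducing ``not locally tree-like'' to ``the breadth-first exploration of $\ball(w,r)$ finds a collision or multi-edge,'' bounding the exploration by $m = O(d^{\,r+1}) = O(n^{1/10})$ revealed edges, and obtaining the per-node failure probability $O(m^2/n) = O(n^{-4/5})$ by deferred decisions (most cleanly in the configuration model, with the Greenhill et al.\ transfer) are all correct. The one caveat is the one you already flag yourself: Markov's inequality gives failure probability only $O(n^{1/5}/n^{0.8}) = O(n^{-3/5})$, which is $o(1)$ but falls short of this paper's stated convention that ``with high probability'' means $1 - n^{-c}$ with $c \ge 1$; your proposed bounded-differences repair (a single pairing swap affects the tree-like status of only the $O(n^{1/10})$ nodes within distance $r$ of the swapped endpoints, so the count of non-tree-like nodes fluctuates by only $\tilde{O}(n^{0.6}) \ll n^{0.8}$ around its $O(n^{1/5})$ mean) does close that gap and should be carried out rather than left as a remark.
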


The proof of this lemma as well as a more detailed discussion of the $H(n, d)$ random graph model can be found in the appendix of \cite{Chatterjee_2021_arXiv}.

%%%%%%%%%%%%%%%%%%%%%%%%%%%%%%%%%%%%%%%%%%%%%%%%%%%%%%%%%%%%%%%%%%%%%%%%%%%%%%%%%%%%%%%%%%%%%%%%%%%%%%%%%%%%%%%%%%%%%%%%%%%%%%%%%%%%%%%%%%%%%%%%%%%%%%%%%%%%%%%
%%%%%%%%%%%%%%%%%%%%%%%%%%%%%%%%%%%%%%%%%%%%%%%%%%%%%%%%%%%%%%%%%%%%%%%%%%%%%%%%%%%%%%%%%%%%%%%%%%%%%%%%%%%%%%%%%%%%%%%%%%%%%%%%%%%%%%%%%%%%%%%%%%%%%%%%%%%%%%%
%%%%%%%%%%%%%%%%%%%%%%%%%%%%%%%%%%%%%%%%%%%%%%%%%%%%%%%%%%%%%%%%%%%%%%%%%%%%%%%%%%%%%%%%%%%%%%%%%%%%%%%%%%%%%%%%%%%%%%%%%%%%%%%%%%%%%%%%%%%%%%%%%%%%%%%%%%%%%%%

\section{A Time-Optimal Deterministic Algorithm} \label{sec:local}

In this section, we present and analyze a simple algorithm that solves the Byzantine counting problem in the \textsf{Local} model --- see Algorithm \ref{alg:local} for the pseudocode.

Our goal here is to show that a set called $\good$ consisting of $\geq  n - o(n)$ good nodes achieve a constant factor approximation of $\log{n}$ when executing our algorithm. Lemma \ref{lem:goodNodes} formalizes the criteria for a good node to be in $\good$: in particular, a good node needs to have a distance of $\Omega(\log{n})$ from all Byzantine nodes and the graph induced by $\good$ must have nearly the same vertex expansion as the original network.

\subsection{Description of the algorithm}

Throughout the algorithm, each node $u$ locally builds an approximation of its $i$-hop neighborhood for rounds $i  =  1, 2, 3, \ldots$, which we denote by $\hat{B}(u,i)$. To this end, we instruct nodes to simply forward the content of their current $\hat{B}(u,i)$ at the start of round $i$. Considering that we assume (at most) $n^{1-\gamma}$ Byzantine nodes, node $u$ needs to be careful when integrating any newly received knowledge.

There are two possibilities for triggering a decision of node $u$. Firstly, $u$ immediately decides if it notices some structural inconsistencies in the received topology information, such as a degree larger than $\Delta$, or the addition of spurious edges to vertices that it had already learned about previously (cf.\ Line~\ref{line:inconsistent}).

Furthermore, after obtaining $\hat{B}(u, i+1)$ by adding the received topology information in round $r$ into $\hat{B}(u,i)$, node $u$ also decides if any of the subsets of $\hat{B}(u,i)$ do not have sufficient vertex expansion with respect to $\hat{B}(u,i+1)$. Intuitively speaking, this second condition ensures that Byzantine nodes cannot trick $u$ into continuing forever. The algorithm's correctness crucially rests on the original network having constant expansion --- a point that is further emphasized by our impossibility result in Theorem~\ref{thm:impossibility}.

\begin{remark}
    We observe that, for $o(n)$ nodes in $G \setminus \good$, the adversary essentially controls the termination time. This is \emph{not} simply a drawback of our algorithm, but, instead, \emph{unavoidable} when assuming a worst-case placement of Byzantine nodes in the network: For instance, consider a $d$-regular expander and a set of $\lfloor n^{1-\gamma}/d\rfloor$ good nodes $U$ that are surrounded by roughly $n^{1-\gamma}$ Byzantine nodes, i.e., none of the edges emanating from $U$ to $G\setminus U$ are connected to good nodes. Then, the Byzantine nodes could simply send information corresponding to a large fake network of some arbitrary size $n'$ with sufficiently high expansion to the nodes in $U$. It is easy to see that no algorithm can distinguish this case from $U$ being indeed part of a network of size $n'$.
\end{remark}

We state below the main result of this section. The rest of this section is devoted to proving it.
\begin{theorem} \label{thm:local}
    Let $\gamma \in (0,1)$ and $\Delta > 0$ be arbitrary fixed constants. Consider an $n$-node network with a maximum node degree bounded by $\Delta$ and a constant vertex expansion $\alpha$. There exists a deterministic \textsf{LOCAL} algorithm such that $n - o(n)$ good nodes decide on a $\left(\frac{\gamma}{2\log\Delta}\right)$-approximation of $\log{n}$ in $O(\log{n})$ rounds in the presence of up to $n^{1-\gamma}$ arbitrarily (adversarially) placed Byzantine nodes.
\end{theorem}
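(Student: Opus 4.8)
The plan is to track, for each good node $u \in \good$ (the set furnished by Lemma~\ref{lem:goodNodes}, instantiated with $F = \byz$ so that $|\good| \ge n - o(n)$), the round $\mathcal{L}_u$ at which $u$ triggers a decision, and to show that $\tfrac{\gamma}{2\log\Delta}\log n \le \mathcal{L}_u \le O(\log n)$. Since $|\good| \ge n - o(n)$ accounts for all but an $o(1)$ fraction of the good nodes, establishing these two bounds simultaneously certifies both the approximation constant $c_1 = \tfrac{\gamma}{2\log\Delta}$ and the $O(\log n)$ running time. I would read the two tests so that $u$ keeps going precisely while \emph{every} nonempty $S \subseteq \hat B(u,i)$ satisfies $|Out_{\hat B(u,i+1)}(S)| \ge \alpha'|S|$ (for a fixed $\alpha' < \alpha$) and no structural inconsistency has surfaced; $u$ decides on the first round where some subset fails the expansion test or an inconsistency is seen.

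\emph{Lower bound.} For $i \le \lfloor \tfrac{\gamma}{2}\log_\Delta n\rfloor$, part~(1) of Lemma~\ref{lem:goodNodes} guarantees that $\ball(u,i)$ is free of Byzantine nodes, so every message $u$ integrates in these rounds is truthful and $\hat B(u,i) = \ball(u,i)$ exactly. No degree violation or conflicting-edge inconsistency can arise in a genuine neighbourhood. Moreover $|\ball(u,i)| \le \Delta^{i} \le n^{\gamma/2} = o(n)$, so every subset $S \subseteq \ball(u,i)$ has $|S| < n/2$ and hence $|Out_{\ball(u,i+1)}(S)| = |Out_G(S)| \ge \alpha|S| > \alpha'|S|$ by the vertex expansion of $G$ (all neighbours of the $i$-ball lie in the $(i+1)$-ball, so the two boundaries coincide). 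Thus no decision fires, giving $\mathcal{L}_u \ge \lfloor\tfrac{\gamma}{2}\log_\Delta n\rfloor = \tfrac{\gamma}{2\log\Delta}\log n - O(1)$.

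\emph{Upper bound.} Let $H$ be the subgraph induced on $\good$. By part~(2) of Lemma~\ref{lem:goodNodes}, balls in $H$ of size up to $|\good|/2$ expand by a factor $1+\alpha'$, so $|\ball_H(u,i)|$ grows geometrically and $u$ correctly reconstructs all of $\ball_H(u,i)$ (good nodes report their incidences truthfully, so the real structure of the good component propagates without error). Within $i = O(\log n)$ hops $u$ has collected every node of $\good$ into $\hat B(u,i)$. Take the witness $S = \good$: its neighbourhood in the view consists only of \emph{real} nodes, since a node of $\good$ never reports an edge to a fabricated vertex, and any fake edge a Byzantine node claims toward some $v \in \good$ contradicts $v$'s truthful incidence list and is flagged as an inconsistency (itself forcing a decision). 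Hence $Out_{\hat B(u,i+1)}(\good)$ is contained in $\byz$ together with the good nodes outside $\good$, a set of size $\le n^{1-\gamma} + o(n) = o(n)$, while $|\good| \ge n - o(n)$; so $|Out(\good)|/|\good| = o(1) < \alpha'$ and the test fails, yielding $\mathcal{L}_u = O(\log n)$. If instead the adversary injects no fake expansion, the same test fails even earlier with $S = \hat B(u,i)$ once the real ball reaches a $\tfrac{1}{1+\alpha'}$-fraction of $V$ and ceases to grow.

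\emph{Main obstacle.} The delicate point is the upper bound: ruling out that the adversary strings $u$ along forever by feeding a large, internally well-expanding fake network. The structural fact that defeats this is that $\good$ is \emph{isolated behind a sparse Byzantine cut} --- it touches the rest of the universe (real or fabricated) only through the $\le n^{1-\gamma}$ Byzantine nodes --- so however the adversary decorates the view, the set $\good$ always presents a cut of expansion $o(1)$, which the test detects the instant $u$ has assembled $\good$. The remaining care is bookkeeping: verifying that $u$ genuinely recovers the true induced structure on $\good$ (so this sparse-cut witness is actually present in $\hat B(u,i)$) and that every fake edge aimed into $\good$ is provably caught as an inconsistency rather than silently inflating the boundary. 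Combined with the $O(\log n)$ diameter of $H$, this confines the decision round to $[\tfrac{\gamma}{2\log\Delta}\log n,\, O(\log n)]$, as required.
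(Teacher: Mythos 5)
Your proposal is correct and follows essentially the same route as the paper: the lower bound comes from Byzantine-free balls of radius $\lfloor\tfrac{\gamma}{2}\log_\Delta n\rfloor$ passing the expansion check, and the upper bound from the witness set $\good$ (the paper uses $R=\hat{B}(u,i)\cap\good$ and $\diam(G)+1$ in place of your $\diam(H)=O(\log n)$) presenting an $o(1)$-expansion cut once the view covers all of $\good$. The ``bookkeeping'' you defer --- that $u$'s view records exactly the true incidence lists of nodes reachable through good nodes, so fake edges aimed into $\good$ trigger the inconsistency (or muteness) check --- is precisely the content of the paper's Lemma~\ref{lem:inconsistent}.
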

\begin{algorithm}
\begin{algorithmic}[1]
    \State Let $\hat{B}(u,1)$ be the set of nodes in the inclusive neighborhood of $u$.
    
    \For{round $i=1,2,\dots$}
        \State Broadcast $\hat{B}(u,i)$ to all neighbors. 
        \State Let $I$ be the information received in the current round.
        \If{$\textsf{inconsistent}(\hat{B}(u,i),I)$ or (some neighbor is mute)}
            \State Decide on $i$ and terminate. \label{line:inconsistent}
        \EndIf
        \State Create $\hat{B}(u,i+1)$ by incorporating $I$ into $\hat{B}(u,i)$.
        \For{each vertex subset $S \subseteq V(\hat{B}(u,i))$}
            \State Let $\alpha'>0$ be the constant given by Lemma~\ref{lem:goodNodes}.
            \If{$S$ does not have vertex expansion $\ge \alpha'$ in graph $\hat{B}(u,i+1)$} \label{line:expansionCheck}
                \State Decide on $i$; terminate.
            \EndIf
        \EndFor
    \EndFor
    
    \item[]
    
    \State \textbf{predicate} $\textsf{inconsistent}(\hat{B}(u,i),I)$ returns \texttt{true} iff 
    \State \qquad $I$ contains a node with degree $>  \Delta$, or \label{line:degreeCheck}
    \State \qquad $I$ contains a set of incident edges for some node $v$, but already $v \in \hat{B}(u,j)$ for some $j\le i-1$.
\end{algorithmic}
\caption{An $O(\log n)$ time algorithm in the \textsf{Local} model. Code for node $u$.} \label{alg:brounds}
\label{alg:local}
\end{algorithm}
\subsection{Analysis of the algorithm}

\begin{lemma} \label{lem:atLeastLOCAL}
    All nodes in $\good$ decide on a value of at least $\lb\lfloor\tfrac{\gamma}{2}\log_\Delta n\rb\rfloor$.
\end{lemma}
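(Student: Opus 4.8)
The plan is to show that, for every $u \in \good$, none of the termination triggers in Algorithm~\ref{alg:local} can fire during rounds $i = 1, \dots, L-1$, where $L \defeq \lfloor \tfrac{\gamma}{2}\log_\Delta n\rfloor$; since a node deciding in round $i$ outputs the value $i$, this immediately yields a decision value of at least $L$. The engine of the argument is Lemma~\ref{lem:goodNodes}(1): for $u \in \good$ the ball $\ball(u, L)$ contains no Byzantine node. I would first establish, by induction on $i \le L$, that $u$'s reconstructed view $\hat{B}(u,i)$ coincides with the true induced subgraph $G[\ball_G(u,i)]$, including every edge incident to a node at distance $\le i-1$ from $u$. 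The point is that the information defining $\hat{B}(u,i)$ propagates only through nodes of $\ball_G(u,i) \subseteq \ball_G(u,L)$, all of which are honest and hence relay the genuine topology faithfully; in round $i \le L-1$ the incoming message $I$ originates from $\ball_G(u,i+1) \subseteq \ball_G(u,L)$ and is therefore also truthful.

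Granting this, I would dispatch the termination triggers one by one for rounds $i \le L-1$. No neighbor of $u$ is mute, as all neighbors are honest. The predicate $\textsf{inconsistent}(\hat{B}(u,i), I)$ returns \texttt{false}: since $G$ has maximum degree $\Delta$, no truthfully reported node can have degree exceeding $\Delta$ (ruling out Line~\ref{line:degreeCheck}), and because honest relaying is monotone and consistent with the fixed graph $G$, the information received for an already-known node never contradicts or spuriously augments what $u$ had committed to earlier. Finally, for the expansion check of Line~\ref{line:expansionCheck}, consider any $S \subseteq V(\hat{B}(u,i)) = V(\ball_G(u,i))$. Its size satisfies $|S| \le |\ball_G(u,i)| \le \Delta^{L} \le n^{\gamma/2} < n/2$ for $n$ large, so the global vertex expansion $\alpha$ of $G$ applies and gives $|Out_G(S)| \ge \alpha |S|$. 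Crucially, since $S \subseteq \ball_G(u,i)$, every out-neighbor of $S$ lies in $\ball_G(u,i+1)$, so $Out_G(S)$ is exactly the out-neighborhood of $S$ computed inside $\hat{B}(u,i+1) = G[\ball_G(u,i+1)]$; hence $S$ has expansion $\ge \alpha > \alpha'$ and the check passes.

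Combining these observations, $u$ survives every round up to $L-1$ without deciding, and therefore outputs a value $\ge L = \lfloor \tfrac{\gamma}{2}\log_\Delta n\rfloor$, as claimed.

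The step I expect to be the main obstacle is the faithful-reconstruction claim together with the soundness of $\textsf{inconsistent}$ on honest data: one must verify that $u$'s incrementally-built view never triggers the predicate's second clause, even though edge information about frontier nodes may arrive across several rounds. This requires pinning down precisely how $\hat{B}(u,i+1)$ is assembled from $\hat{B}(u,i)$ and $I$, and checking that, on the fixed true graph $G$, this assembly is monotone and never re-reports edges of a node whose incidences $u$ has already finalized. Once this bookkeeping is settled, the remaining two triggers and the expansion bound are routine.
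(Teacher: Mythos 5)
Your proof is correct and follows the same inductive skeleton as the paper's: both arguments hinge on Lemma~\ref{lem:goodNodes}(1) to guarantee that $\ball(u,\lfloor\tfrac{\gamma}{2}\log_\Delta n\rfloor)$ is Byzantine-free (so no inconsistency or muteness trigger can fire before round $\lfloor\tfrac{\gamma}{2}\log_\Delta n\rfloor$), and both use the size bound $|\ball(u,i)|\le \Delta^{i+1}\le n^{\gamma/2}$ to control the expansion check. Where you genuinely diverge is in \emph{which} expansion guarantee you invoke at Line~\ref{line:expansionCheck}: the paper appeals to Lemma~\ref{lem:goodNodes}(2), i.e., the expansion $\alpha'$ of small subsets inside the culled good subgraph $H$, whereas you appeal directly to the global vertex expansion $\alpha$ of $G$, observing that $Out_G(S)\subseteq\ball_G(u,i+1)$ and that $\hat{B}(u,i+1)$ faithfully records all edges incident to nodes of $\ball_G(u,i)$, so the out-neighborhood the algorithm computes is exactly $Out_G(S)$ and has size $\ge\alpha|S|>\alpha'|S|$. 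Your route is arguably cleaner and more self-contained for this lemma: since the entire ball is honest, there is no need to pass through $H$ at all, and doing so actually requires a small extra argument (the ball $\hat{B}(u,i)$ the algorithm holds is $\ball_G(u,i)$, which need not coincide with $\ball_H(u,i)$ because $H$ may have pruned some good nodes); the paper's phrasing buys uniformity with the machinery it needs anyway in Lemma~\ref{lem:atMostLOCAL}, where the honest part $R$ really does have to be measured inside $H$. Your flagged bookkeeping concern about the second clause of $\textsf{inconsistent}$ is handled exactly as you anticipate: within the Byzantine-free ball every relayed edge list is the true one, so a node's incidences are never re-reported differently, and your hedge that edges between two frontier nodes at distance exactly $i$ may be absent from $\hat{B}(u,i)$ is harmless because the expansion check only needs the edge lists of nodes at distance $\le i$, all of which are known by the end of round $i$.
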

\begin{proof}
We will proceed by induction over the number of rounds. Consider the graph $H$ given by Lemma~\ref{lem:goodNodes} and recall that $u \in V(H)$ by definition. Since $H$ has a vertex expansion $\ge \alpha'$, it follows that $u$'s neighborhood (in $H$) has size at least $1 + \alpha'$, which guarantees that $u$ passes the expansion-check in Line \ref{line:expansionCheck} in Round $1$. Moreover, $u$ has a distance of at least $\lfloor\tfrac{\gamma}{2}\log_\Delta n\rfloor$ from any Byzantine node, and hence it does not receive any inconsistent information during the first $\lfloor\tfrac{\gamma}{2}\log_\Delta n\rfloor$ rounds. This ensures $u$ will not decide in Line~\ref{line:inconsistent} Round $1$, which completes the inductive base.

Now, consider the inductive step $1  <  i  <  \lfloor \tfrac{\gamma}{2}\log_\Delta n\rfloor$, and suppose that $u$ has not decided at the end of round $i-1$. Similarly as in the case $i = 1$, it holds that $u$ does not decide due to receiving inconsistent information. Moreover, note that
\begin{equation*}
    |\ball(u,i)| \le \Delta^{i+1} \le n^{\gamma/2} < \tfrac{|H|}{2}\text{,}
\end{equation*}
since $|H|\ge n - o(n)$ by Lemma~\ref{lem:goodNodes}. Hence every subset of $\hat{B}(u,i)$ is guaranteed to have a vertex expansion of at least $\alpha'$, which ensures that $u$ continues to round $i+1$ without deciding.
\end{proof}

The next lemma tells us that, if a good node $u$ that has not yet decided, then its local $i$-neighborhood approximation $\hat{B}(u,i)$ does not contain inconsistent information concerning the nodes in $\good$. We will make use of this property in Lemma~\ref{lem:atMostLOCAL} below.

\begin{lemma} \label{lem:inconsistent}
    Suppose that $u \in \good$ has not decided by the end of round $i$, and consider graph $H$ given by Lemma \ref{lem:goodNodes}. Then, for each $v \in \ball_H(u,i)$ and any node $w$, it holds that $e=\{v,w\} \in \hat{B}(u,i)$ if and only if $e \in E(G)$.
\end{lemma}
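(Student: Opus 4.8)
I would prove Lemma~\ref{lem:inconsistent} by induction on the round index $i$, carrying both directions of the biconditional as a single invariant: \emph{completeness} (every true edge incident to a node of $\ball_H(u,i)$ is recorded in $\hat{B}(u,i)$) and \emph{soundness} (no edge incident to such a node that is absent from $E(G)$ is recorded). It is cleanest to state the invariant universally, over \emph{all} honest nodes that have not decided by the current round, since $u$'s view in round $i$ is assembled from the views its neighbors broadcast in round $i-1$. The base case follows from the initialization together with Lemma~\ref{lem:goodNodes}, item~1: a node $u \in \good$ lies at distance at least $D \defeq \lfloor\tfrac{\gamma}{2}\log_\Delta n\rfloor \ge 1$ from every Byzantine node, so all of $u$'s neighbors are honest and report faithfully, and $\hat{B}(u,1)$ records exactly the edges incident to $u$. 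Throughout I would use that ``$u$ has not decided by round $i$'' means neither clause of Line~\ref{line:inconsistent} ever fired: $u$ never saw a claimed degree exceeding $\Delta$ (Line~\ref{line:degreeCheck}), never saw an already‑finalized node's incident edges modified, and never saw a mute neighbor.

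For completeness I would fix $v \in \ball_H(u,i)$ and an honest shortest path $v=p_0,p_1,\dots,p_d=u$ inside $H$, with $d=\mathrm{dist}_H(u,v)\le i$. The true incident‑edge set of $v$ sits in $\hat{B}(p_0,1)$, and each $p_j$ forwards its accumulated knowledge. The only thing to rule out is that some relay $p_j$ terminates before it can pass this information along: a terminated node goes mute, which would force the next node toward $u$ to decide (the mute‑neighbor clause), and this muteness would reach $u$ within $d$ further rounds, contradicting that $u$ is still running at the end of round $i$. Hence no relay goes mute in time, and a one‑hop‑per‑round induction along the honest path delivers all true edges of $v$ into $\hat{B}(u,i)$.

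Soundness is the crux. Suppose toward a contradiction that $\{v,w\}\in\hat{B}(u,i)$ with $v\in\ball_H(u,i)$ honest but $\{v,w\}\notin E(G)$. Such an edge can only have been fabricated by a Byzantine node (IDs cannot be forged, but a node may assert arbitrary edges among other IDs in the topology it relays), and by Lemma~\ref{lem:goodNodes}, item~1, every fabricated report originates at distance $\ge D$ from $u$. The key structural fact I would exploit is that $\ball_H(u,D)$ is an entirely honest ``moat'': within it, by the inductive invariant, every honest node finalizes the true incident‑edge set of each nearby honest vertex before any fabrication travelling inward can reach it. Consequently a fabricated report about $v$ that tries to cross the moat toward $u$ meets an honest node that has already finalized $v$'s true edges; that node then either sees a claimed degree above $\Delta$ (Line~\ref{line:degreeCheck}) or sees modified incident edges for the already‑known $v$ (second clause of \textsf{inconsistent}), and it decides and goes mute rather than forwarding the fabrication. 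Hence the spurious $\{v,w\}$ is filtered out before it can be incorporated by $u$, contradicting $\{v,w\}\in\hat{B}(u,i)$.

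The step I expect to be the main obstacle is making this filtering argument quantitatively airtight, i.e.\ the timing race between the true report and the fabricated one. The delicate sub‑case is a fabrication that merely \emph{adds} an edge to an honest, under‑degree $v$ without exceeding $\Delta$: if such a report were incorporated \emph{before} the true (complete) edge set of $v$ is finalized, the later true report would add nothing new and so would not trip the conflict clause. Ruling this out is exactly where the distance bound $D$ is essential — it guarantees that along every honest path the true edge set is finalized strictly earlier than any report originating $\ge D$ hops away can arrive — and it is also where the precise round bookkeeping matters, since the incident edges of a node at distance $d$ are delivered only after roughly $d{+}1$ hops. I would absorb this one‑round offset into the induction by tracking $\hat{B}(u,\cdot)$ against the frontier it has actually \emph{finalized} (equivalently, phrasing the invariant against $\ball_H(u,i-1)$ and lifting at the end). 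Modulo this offset, completeness via honest propagation and soundness via the freeze‑on‑conflict behaviour of the inconsistency predicate combine to give the claim.
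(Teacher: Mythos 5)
Your completeness half is essentially the paper's argument: propagate $E_v$ along a shortest all-good path in $H$, and rule out a mute relay because muteness would itself propagate hop-by-hop to $u$ and force a decision by round $i$. The gap is in your soundness half. You claim that a fabricated report about $v$ is ``filtered out before it can be incorporated by $u$'' because it must cross the honest moat $\ball(u,D)$, $D=\lfloor\tfrac{\gamma}{2}\log_\Delta n\rfloor$, and there meet a node that has \emph{already} finalized $v$'s true edges; you also assert that the distance bound guarantees the true edge set is finalized strictly earlier than any report originating $\ge D$ hops away can arrive. Neither claim holds when $v$ is far from $u$. Take $v$ at distance $i \gg D$ from $u$ in $H$ and a Byzantine node $b$ with $\mathrm{dist}_G(b,u) = D+1$ that fabricates an incident-edge set for $v$: the fabrication reaches $u$ around round $D+1$ and is duly incorporated into $\hat{B}(u,D+2) \subseteq \hat{B}(u,i)$, while the moat nodes (and $u$) do not learn $v$'s true edges until around round $i-D$ (respectively $i$). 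The honest relays in the moat therefore forward the fabrication long before they could possibly detect a conflict, and nothing is filtered.

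The lemma does not need the fabrication to be blocked; it needs that its presence in $\hat{B}(u,i)$ is incompatible with $u$ still running --- and that is how the paper argues. By the completeness half, the \emph{true} set $E_v$ is guaranteed to reach $u$ by round $\mathrm{dist}_H(u,v) \le i$, so if some $E_v' \ne E_v$ is also recorded in $\hat{B}(u,i)$, then both versions meet \emph{at $u$ itself} by round $i$; whichever arrives second trips the second clause of $\textsf{inconsistent}$ at $u$, which then decides by the end of round $i$, contradicting the premise. Your ``delicate sub-case'' (a fabricated superset of $E_v$ arriving first, so that the later true report ``adds nothing new'') is a fair observation about the informally stated predicate, but it is resolved by reading the check as firing whenever a received incident-edge set for an already-known $v$ \emph{differs} from the recorded one (in either direction), not by your timing claim, which fails as shown above. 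With the soundness mechanism relocated to $u$ in this way, the rest of your write-up (including the round-indexing bookkeeping you flag) goes through.
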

\begin{proof}
    By the definition of $H$, it follows that, for each $v \in \ball_H(u,i)$, there exists a shortest path $p  =  (v = p_1, p_2, \ldots, p_j = u)$ consisting of $j \le i$ good nodes connecting $u$ and $v$. Moreover, it is easy to see that node $p_k$ ($1\le k < j$) on path $p$ must have broadcast its topology information in round $i-j+k$, since otherwise its neighbor $p_{k+1}$ would have terminated at the end of round $i-j+k$, because of having noticed that $p_k$ was mute. This in turn would have cause $p_{k+1}$ to terminate at the end of round $i-j+k+1$ and hence would have propagated to $u$ by round $i$, contradicting the premise of the lemma.

    Since each of the good nodes in $p$ forwards the received topology information towards $u$, it follows that node $u$ receives a message from some good neighbor, which contains the exact set $E_v$ of edges incident to $v$ in $G$. Suppose that, in some round during round $i$, node $u$ also receives a message containing a set of edges $E_v' \ne E_v$ of $v$, possibly injected by Byzantine nodes. However, Line~\ref{line:inconsistent} ensures that $u$ decides instantly in round $i$, since it has added inconsistent information to $\hat{B}(u,i)$. This results in a contradiction.
\end{proof}

\begin{lemma} \label{lem:atMostLOCAL}
    Every node in $\good$ decides on a value of at most $\diam(G) + 1$.
\end{lemma}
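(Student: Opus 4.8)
The plan is to argue by contradiction: I assume that some $u \in \good$ has \emph{not} decided by the end of round $\diam(G)+1$ and I derive that the expansion check in Line~\ref{line:expansionCheck} must in fact have fired earlier. The engine of the proof is a single well-chosen test set. Let $H$ be the good core supplied by Lemma~\ref{lem:goodNodes}, so $|V(H)| \ge n - o(n)$ and hence $|V(G) \setminus V(H)| = o(n)$, and let $i_0$ be the first round by which $u$ has correctly learned essentially all of this core, i.e.\ $V(H) \cap V(\hat{B}(u,i_0))$ has size $\ge n - o(n)$. I will take $S$ to be exactly this correctly-known core and show that its vertex expansion inside $\hat{B}(u,i_0+1)$ is $o(1) < \alpha'$, which forces a decision at round $i_0$.

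The computation of $Out(S)$ is where Lemma~\ref{lem:inconsistent} does the work. Suppose $u$ has not decided by the end of round $i_0+1$; then, applying Lemma~\ref{lem:inconsistent} at round $i_0+1$, every node $v \in \ball_H(u,i_0+1) \supseteq S$ carries in $\hat{B}(u,i_0+1)$ exactly its true set of incident $G$-edges, and in particular no Byzantine-injected fake edge is attached to any node of $S$. Consequently the out-neighborhood of $S$ inside $\hat{B}(u,i_0+1)$ is contained in $Out_G(V(H)) \subseteq V(G)\setminus V(H)$, a set of size $o(n)$, so the expansion of $S$ is at most $o(n)/(n-o(n)) = o(1) < \alpha'$. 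Since $S \subseteq V(\hat{B}(u,i_0))$ by the choice of $i_0$, the expansion check at round $i_0$ must fire, so $u$ decides at round $i_0$. This is the place where the additive $+1$ is essential: invoking the check at round $i_0$ requires knowing $\hat{B}(u,i_0+1)$ is uncorrupted on $S$, which in turn needs $u$ to still be undecided at the end of round $i_0+1$; assuming non-decision merely through round $\diam(G)$ would not grant this, whereas assuming it through round $\diam(G)+1 \ge i_0+1$ does, and yields the contradiction $i_0 \le \diam(G)$.

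The remaining ingredient, and the step I expect to be the main obstacle, is the bound $i_0 \le \diam(G)$, i.e.\ that $u$ correctly learns a $(1-o(1))$-fraction of the network within $\diam(G)$ rounds. Correct topology information travels one hop per round along paths of good forwarding nodes — a node that fails to relay is flagged as mute by its good neighbor, exactly as in the proof of Lemma~\ref{lem:inconsistent} — and since $u \in \good$ sits at distance $\Omega(\log n)$ from every Byzantine node, the early rounds are entirely uncorrupted. The delicacy is that a good path, rather than an arbitrary $G$-path, is needed, so a node $v$ all of whose short paths to $u$ are blocked by Byzantine nodes might be learned late or incorrectly; however, because $|\byz| \le n^{1-\gamma}$ and $G$ is an expander, the set of such ``shadowed'' nodes is only $o(n)$ and can be absorbed into the $o(n)$ complement $V(G)\setminus V(H)$, leaving a correctly-known core of size $n - o(n)$ reached within $\diam(G)$ rounds. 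The only other thing to rule out — that Byzantine-injected fake vertices inflate $Out(S)$ and sustain the expansion condition — is precisely what Lemma~\ref{lem:inconsistent} forbids for the interior nodes $S$, so that $Out(S)$ can contain only the genuine, $o(n)$-sized real boundary.
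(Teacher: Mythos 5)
Your proof is correct and follows essentially the same route as the paper's: both arguments exhibit a set of $n-o(n)$ correctly-learned good nodes (your $S$, the paper's $R=\hat{B}(u,i)\cap\good$ at $i=\diam(G)+1$), invoke Lemma~\ref{lem:inconsistent} to rule out fake edges attached to that set, and conclude that its genuine out-boundary is only $o(n)$, so the expansion check of Line~\ref{line:expansionCheck} must fire by round $\diam(G)+1$. The step you flag as the main obstacle --- that the good core is reached within $\diam(G)$ rounds --- is exactly the point the paper disposes of with the terse assertion that $\diam(H)\le\diam(G)$ and that $\good$ is connected in $H$, so your treatment is no less complete than the original.
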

\begin{proof}
    Assume toward a contradiction that there is a node $u \in \good$ that decides on a value strictly greater than $\diam(G) + 1$. By the description, of the algorithm, this means that $u$ did not decide when executing round $i$, where $i = \diam(G) + 1$. Consider the content of $\hat{B}(u,i)$ after receiving all messages for round $i$. Note that it is possible that $\hat{B}(u,i)$ also contains information that was injected by Byzantine nodes.

    Let $F$ denote the \emph{Byzantine part} of $\hat{B}(u,i)$, i.e.,
    \begin{center}
        $F  \defeq  \hat{B}(u,i) \setminus \good$
    \end{center}
    and call
    \begin{center}
        $R  \defeq  \hat{B}(u,i) \cap \good$
    \end{center}
    its \emph{honest part}.

    We can assume that Byzantine nodes do not send any inconsistent information regarding the graph induced by $R$, as otherwise $u$ will decide in round $i$ and we are done. Similarly, we can rule out that any node in $R$ has already decided: For if some $w$ decided and remained mute, this would cause its good neighbors to decide in the next round (cf.\ Line \ref{line:inconsistent}), which in turn would propagate (through good nodes) to $u$, causing it to decide. Consequently, Lemma~\ref{lem:inconsistent} tells us that all edges emanating from nodes in $R \setminus \byz$ in graph $\hat{B}(u,i)$ also exist in $G$. In particular, there are no edges between $R \setminus \byz$ and $F$. Since every node in $G$ has distance at most $\diam(G)  =  i - 1$ to $u$, it follows that $R \subseteq \hat{B}(u,i-1)$ and thus $u$ will check $R$'s vertex expansion with respect to graph $\hat{B}(u,i)$ at the end of round $i+1$.

    To complete the proof, we analyze the expansion-check in Line \ref{line:expansionCheck} for the set $R$. Observe that $R$ contains all nodes within distance $\diam(G)$ from $u$ in graph $H$ (see Lemma~\ref{lem:goodNodes}). Given that $\diam(H) \le \diam(G)$ and the fact that nodes in $\good$ are connected in $H$, we know that
    \begin{center}
        $|R| \ge |\good| = n - o(n)$.
    \end{center}

    Recall that there are at most $n^{1-\gamma}$ Byzantine nodes in $R \cap \hat{B}(u,i)$. Since we assumed that Byzantine nodes did not send inconsistent information, each Byzantine node has at most $\Delta$ neighbors in $F$ (cf.\ Line \ref{line:degreeCheck}). It follows that there is a set $S'$ of at most $\Delta n^{1-\gamma} = o(n)$ fake vertices in the set $(\hat{B}(u,i) \setminus R) \subseteq F$ that have an edge to $R$. To satisfy the expansion-check (see Line~\ref{line:expansionCheck}), the number of neighbors of vertices in $R$ would need to be $|R|(1+\alpha') = \Omega(n)$, far exceeding the $o(n)$ fake vertices in $S'$. Hence the expansion-check fails for set $R$, causing $u$ to decide on $i  =  \diam(G) + 1$.
\end{proof}

Combining Lemmas \ref{lem:atLeastLOCAL} and \ref{lem:atMostLOCAL} shows the claimed bound on the approximation achieved by the $n - o(n)$ nodes in set $\good$. From Lemma \ref{lem:atMostLOCAL}, it follows that the round complexity until all but $o(n)$ nodes have decided is $O(\diam(G)) = O(\log{n})$. This completes the proof of Theorem \ref{thm:local}.

%%%%%%%%%%%%%%%%%%%%%%%%%%%%%%%%%%%%%%%%%%%%%%%%%%%%%%%%%%%%%%%%%%%%%%%%%%%%%%%%%%%%%%%%%%%%%%%%%%%%%%%%%%%%%%%%%%%%%%%%%%%%%%%%%%%%%%%%%%%%%%%%%%%%%%%%%%%%%%%
%%%%%%%%%%%%%%%%%%%%%%%%%%%%%%%%%%%%%%%%%%%%%%%%%%%%%%%%%%%%%%%%%%%%%%%%%%%%%%%%%%%%%%%%%%%%%%%%%%%%%%%%%%%%%%%%%%%%%%%%%%%%%%%%%%%%%%%%%%%%%%%%%%%%%%%%%%%%%%%
%%%%%%%%%%%%%%%%%%%%%%%%%%%%%%%%%%%%%%%%%%%%%%%%%%%%%%%%%%%%%%%%%%%%%%%%%%%%%%%%%%%%%%%%%%%%%%%%%%%%%%%%%%%%%%%%%%%%%%%%%%%%%%%%%%%%%%%%%%%%%%%%%%%%%%%%%%%%%%%

\section{Byzantine Counting with Small Messages} \label{sec:congest}

We now describe an algorithm that guarantees most good nodes will achieve a constant factor approximation of $\log(n)$ while sending only messages of small size (proportional to the number of bits of any node's ID). We give the detailed correctness proof in Section \ref{sec:congestAnalysis}. As mentioned in Section \ref{sec:model}, our algorithm works in the $H(n,d)$ $d$-regular random graph model with high probability, i.e., with probability at least $1 - n^{-c}$, for some constant $c \geq 1$. As discussed in Section \ref{sec:model}, this implies that the algorithm works in almost all $d$-regular graphs with probability at least $1 - o(1)$.

In Algorithm \ref{alg:congest}, each node keeps track of its current estimate in a variable $i$ that is initialized to a fixed constant. A node increases $i$ whenever it enters a new \emph{phase}, where the goal of a phase is to determine whether $i$ is already a sufficiently good approximation of $\log(n)$. On the other hand, once a node concludes that its current value of $i$ is sufficiently large, it \emph{decides on $i$} and stops participating in future phases. Each phase $i$ consists of roughly $e^{(1-\gamma)i}+1$ \emph{iterations}, and each iteration of phase $i$ takes $2i+5$ rounds: During the first $i+2$ rounds, nodes disseminate so called ``beacon messages'' (described next) whereas, during the following $(i+3)$-rounds, all yet-undecided nodes ensure that everyone in their $(i+3)$-neighborhood knows that they have not yet decided by sending a ``continue'' message.

\paragraph{Beacon Messages and Path Fields.} At the start of an iteration, a good node $u$ chooses to become \emph{active} with probability $\Theta\!\lb(\frac{i}{d^i}\rb)$, where $d$ is $u$'s degree. The intuition behind this probability is that this ensures that on the average there are approximately $O(i)$ nodes that are active in a ball of radius $i$ --- note that the tree-like property of expander graphs (see Section \ref{sec:treelike}) ensures that the number of nodes in a ball is $\Theta(d^i)$.

If $u$ becomes active, it broadcasts a \emph{beacon message} to its neighbors, which is then forwarded for $i+2$ rounds. Intuitively speaking, these beacon messages signal to other nodes that they should not yet decide on their current estimate.

In more detail, a beacon message $\langle \texttt{beacon}, u, P \rangle$ has an \emph{origin id} $u$, and a \emph{path field} $P$, which is the path of nodes that the message has visited so far. That is, whenever the message is sent from a node $w$ to a node $v$, we append $v$ to the path field before forwarding the message. Of course, it is entirely possible that these fields contain bogus information if the message passed through a Byzantine node.

\paragraph{Blacklisting.} Whenever a node $v$ receives a beacon message, it inspects the attached path field $P  =  (u_1, u_2, \ldots, u_k)$ by performing a series of checks.

First, $v$ checks whether the neighbor from which it received the message does indeed have id $u_k$. If $v$ finds that the sender has an id different from $u_k$, it simply discards that message. Node $v$ also maintains a \emph{blacklist} set $BL$, which is reset at the start of each phase and is gradually filled throughout a phase's iterations.

In more detail, let us suppose that the above mentioned beacon message was the \emph{first} one received by $v$ in iteration $1$ of phase $i$, from some neighbor $w$. Then, $v$ adds all nodes except the ones in the {$\lfloor (1-\epsilon)\rfloor i$-suffix of $P$} to its blacklist $BL$, where this {suffix} consists of the last $\lfloor (1-\epsilon)\rfloor i$ nodes on the path to the destination $v$. The intuition behind this rule is that $u$ blindly trusts all nodes that are close to it, but won't accept another beacon message in the future if it has traversed the same (far away) nodes twice in this iteration.

In addition, $v$ sets its variable $\shortestPath \gets P$, which indicates the (supposedly) shortest path over which $v$ received a beacon message in this iteration. (If $v$ receives two or more beacon messages simultaneously, it discards all but one.) Note that $v$ resets $\shortestPath$ at the end of each iteration. Then, assuming that we are still in the first $i+1$ rounds of the iteration upon the reception of this beacon message, $v$ broadcasts the message $\langle \texttt{beacon}, P' \rangle$ with the modified path field $P'$ to its neighbors where $P'$ is obtained by appending $v$ to $P$.

As mentioned, blacklisting ensures that $v$ does not accept a beacon message if the message took a path leading through the same nodes from which it has already seen a beacon message in this phase. Blacklisting is implemented as follows: If $v$ receives a beacon message $m'$ in some iteration $\ell>1$ of phase $i$ and the node IDs contained in the path field that are at least $\lfloor(1-\epsilon)i\rfloor$ away from $v$ intersect with the nodes already added to $BL$ during the previous iterations, then $v$ will not use $m'$ to update its $\shortestPath$ variable. However, it is important to keep in mind that, even in this case, $v$ still broadcasts the message with the updated path field to its neighbors, assuming we are still in the first $(i+1)$-rounds of the iteration.

Consequently, if $(i+2)$ rounds have passed and node $v$ did not set $\shortestPath$ in this iteration either because it did not receive any beacon messages or all received beacons carried already blacklisted node IDs, then $v$ decides on its current value of $i$.

Introducing blacklisting avoids the scenario where Byzantine nodes keep generating new beacon messages that trigger good nodes to continue progressing to the next phase, possibly significantly overshooting the actual value of $\log(n)$ before deciding. The blacklisting mechanism kicks in once $i = \Omega(\log{n})$ since the algorithm ensures that (see Lemma \ref{lem:upperEnd}):
\begin{enumerate}
    \item there is no iteration in which a good node still generates a new beacon message (whp);
    \item the number of iterations performed in phase $i$ exceeds the number of Byzantine nodes.
\end{enumerate}

For instance, suppose that a Byzantine node $b$ generates a beacon message with a fake path field in iteration $1$. Even though $b$ can trick all good nodes into accepting this beacon message in this iteration, it will fail to convince a set $U$ of good nodes that have a distance of at least $\lfloor (1-\epsilon)i\rfloor$ from $b$ into accepting such a message in any future iteration of this phase.

To see why this is the case, observe that when a node $u \in U$ receives a message where $b$ was involved in faking the path field, $b$ will be added to $u$'s blacklist because its ID will not be in the $\lfloor (1 - \epsilon)i\rfloor$ suffix of the path field by the time the message reaches $u$ (cf.\ Line \ref{line:blacklist}), assuming that the message did not pass through other Byzantine nodes that are closer to $u$. (Recall that $i$ is large enough such that good nodes have ceased from generating beacon messages and hence every beacon message that is still in transit must have been injected by Byzantine nodes.) The upshot is that a good node $u$ that has all the Byzantine nodes at a distance of at least $\lfloor (1-\epsilon)i\rfloor$ will blacklist at least $1$ Byzantine node $b$ in each iteration if $b$ generates a beacon message. Hence, $u$ will encounter an iteration in which its $\shortestPath$ variable is not set, thus causing it to decide on $i$.

\paragraph{Technical challenges.} There are several technical difficulties that we need to handle in our correctness proof. For instance, we need to choose the probability of generating beacon messages in a way such that $i$ does not become too large before most nodes have reached a decision, as we might end up with a value of $i$ where almost all good nodes are \emph{within} distance $\lfloor (1 - \epsilon)i\rfloor$ of some Byzantine node, thus disarming the blacklisting mechanism.

On the other hand, the blacklisting process itself reduces the number of nodes that a good node considers for beacon messages, which may cause too many nodes to decide early due to not seeing a beacon message in each iteration. We use two techniques to avoid this second problem:
\begin{enumerate}
    \item We use the \emph{tree-like} property of the regular expander graphs (see Section \ref{sec:treelike}). This shows that the remaining nodes provide sufficient expansion even if a large number of paths have been invalidated due to at least one of their nodes being blacklisted.

    \item We instruct undecided nodes to send out \texttt{continue} messages that are forwarded for $(i + 3)$ rounds in phase $i$. Upon reception of such a message, a node that has possibly already decided and stopped increasing its phase counter, will become active again and generate beacon messages with the appropriate probability.
\end{enumerate}

We state below the main result of this section. The rest of this section is devoted to proving it.
\begin{theorem} \label{thm:congest}
    Let $\xi$ and $\beta$ be any arbitrarily small (but fixed) positive constants. Let $B(n)  =  n^{\frac{1}{2} - \xi}$ denote the number of Byzantine nodes in the network. Consider the $H(n,d)$ random regular graph $G$ of $n$ nodes with constant vertex expansion, where $d$ is a sufficiently large constant. Then there exists an algorithm such that, with high probability, at least $(1 - \beta)n$ nodes send messages of at most $O(\log{n})$ bits and decide on a constant factor approximation of $\log{n}$ in time $O(B(n) \cdot \log^2{n})$ in the presence of up to $B(n)$ arbitrarily (adversarially) placed Byzantine nodes.
\end{theorem}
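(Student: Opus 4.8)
The plan is to exhibit a large set of ``well-behaved'' good nodes and then sandwich their decision value $i$ between two constant multiples of $\log n$. I would first apply Lemma~\ref{lem:goodNodes} together with the locally-tree-like property (Lemma~\ref{lemma-most-nodes-are-locally-tree-like-preliminaries}) to obtain a set $\good'$ of at least $(1-\beta)n$ good nodes such that, for each $u \in \good'$: (i) $\ball(u,\Theta(\log n))$ contains no Byzantine node and induces a $(d-1)$-ary tree, and (ii) every Byzantine node lies at distance $\ge \lfloor(1-\epsilon)i^*\rfloor$ from $u$, where $i^* = \Theta(\log n)$ is the intended decision threshold. The correctness argument then decomposes into a lower-bound claim (no node in $\good'$ decides while $i \le c_1\log n$) and an upper-bound claim (every node in $\good'$ decides by $i = c_2\log n$), after which the approximation guarantee, the running time, and the message-size bound follow.

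For the lower bound I would fix $i \le c_1\log n$ and a target node $u \in \good'$. Because the $i$-neighborhood of $u$ is tree-like it contains $\Theta(d^i)$ good nodes, each of which becomes active independently with probability $\Theta(i/d^i)$; hence the expected number of active sources within distance $i$ is $\Theta(i)$, and a Chernoff bound shows that at least one such source exists and that its beacon reaches $u$ within the $i+2$ forwarding rounds with high probability. The delicate point is to argue that blacklisting does not prematurely silence these sources: across the $\Theta(e^{(1-\gamma)i})$ iterations of phase $i$ a node can only invalidate paths through at most that many far-away nodes, and since $d \ge 8$ gives $e^{(1-\gamma)i} \ll d^i$, the tree-like expansion guarantees a fresh, non-blacklisted beacon source in every iteration. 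A union bound over the $(1-\beta)n$ nodes of $\good'$ and the $O(\log n)$ small phases completes this direction, so that no such node decides below $c_1\log n$.

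The upper bound is the crux and rests on the blacklisting mechanism (the ``upper-end'' lemma announced in the overview). Once $i \ge c_2\log n$, the activation probability $\Theta(i/d^i)$ is $1/\poly(n)$, so a union bound shows that with high probability no good node becomes active during the entire phase; consequently every beacon that a node $u \in \good'$ receives must have been injected by a Byzantine node. Since all Byzantine nodes sit at distance $\ge \lfloor(1-\epsilon)i\rfloor$ from $u$, the offending Byzantine id cannot lie in the $\lfloor(1-\epsilon)i\rfloor$-suffix of the path field and is therefore added to $u$'s blacklist; thus each iteration in which $u$ still sets its $\shortestPath$ variable retires at least one Byzantine node. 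Because the phase runs for $\Theta(e^{(1-\gamma)i}) > B(n) = n^{\frac12-\xi}$ iterations (the defining relation between the iteration count and $B(n)$), after at most $B(n)$ iterations $u$ has blacklisted every Byzantine node, encounters an iteration with no valid beacon, and decides on $i = \Theta(\log n)$.

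Combining the two bounds shows that all $(1-\beta)n$ nodes of $\good'$ decide on some $i \in [c_1\log n, c_2\log n]$, a constant-factor estimate of $\log n$ as required by Definition~\ref{definition-problem-definition}. The running time is $\sum_{i \le c_2\log n} \Theta(e^{(1-\gamma)i})\cdot(2i+5)$, a geometric sum dominated by its last term, which is $O(B(n)\log^2 n)$; and each message a good node sends is a beacon or continue header together with a path field of $O(\log n)$ ids, hence small-sized. I expect the main obstacle to be the joint calibration of the two thresholds against the omniscient, adaptive adversary: the decision level $i^*$ must be large enough that honest beacons die out (for the upper bound) yet small enough that most good nodes still have every Byzantine node at distance $\ge \lfloor(1-\epsilon)i^*\rfloor$, so that blacklisting stays ``armed'' --- this is precisely where the restriction $B(n) = n^{\frac12-\xi}$ and the separation guarantee of Lemma~\ref{lem:goodNodes} are used. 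Controlling the blacklist-exhaustion effect for small $i$ (via the tree-like property and the \texttt{continue} messages that re-activate prematurely decided nodes) while simultaneously guaranteeing exhaustion of the adversary's beacons for large $i$ is the most technically demanding part.
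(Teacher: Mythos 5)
Your overall architecture matches the paper's: isolate a set of good, locally tree-like, Byzantine-distant nodes (the paper's $\goodtree = \good \cap \treelike$); show they survive the early phases; show that blacklisting forces a decision once good nodes stop generating beacons; then read off the approximation factor, running time, and message size. Your upper-bound half (no good node activates once $i = \Theta(\log n)$, each iteration blacklists at least one Byzantine node because all Byzantine nodes lie outside the trusted suffix, and the iteration count exceeds $B(n)$) is essentially the paper's Lemma~\ref{lem:upperEnd}, and your time and message-size accounting is fine.

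The genuine gap is in the lower-bound direction. You claim that for every $u$ in your set and every phase $i \le c_1\log n$, some non-blacklisted source in $\ball(u,i)$ activates ``with high probability,'' and you then union bound over the $(1-\beta)n$ nodes and all phases to conclude that \emph{no} such node decides early. This fails for small $i$: the expected number of active sources in $\ball(u,i)$ is only $\Theta(i)$, so the probability that none activates in an iteration is $e^{-\Theta(i)}$ --- a constant when $i = O(1)$ and still $\omega(1/\poly(n))$ for any $i = o(\log n)$. No union bound over $n$ nodes is available, and in expectation a constant fraction of nodes \emph{do} decide in the first constant-numbered phase. The paper's resolution (Lemmas~\ref{lem:expectedNotStoppedi} and~\ref{lem:whpNotStoppedNodes}) is structurally different: it bounds the per-node decision probability in phase $i$ by $e^{-(c_1/2)i}$, so the expected number of early deciders summed over all phases is at most $\frac{2n}{e^{c_1/2}-1} \le \beta n$ for $c_1$ large, and then applies the Azuma--Hoeffding bounded-differences inequality (using that a node's outcome influences only the $d^{O(i)}$ nodes within distance $O(i)$) to concentrate that count. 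This is precisely why the theorem promises only $(1-\beta)n$ deciders rather than all of $\goodtree$. A secondary imprecision: your justification that blacklisting cannot exhaust the beacon sources, ``$e^{(1-\gamma)i} \ll d^i$,'' compares the wrong quantities. The binding constraint is that the at most $e^{(1-\gamma)i}+1$ nodes blacklisted per phase stay below half of the $\lceil(1-\epsilon)i\rceil$-boundary, whose size is $d^{(1-\epsilon)i} = e^{(1-\delta)\gamma i}$; this forces $\gamma > \frac{1}{2-\delta}$ and is the actual origin of the $n^{\frac{1}{2}-\xi}$ tolerance bound (Lemmas~\ref{lem:borderBlacklisted} and~\ref{lem:available}).
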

\begin{algorithm}
\begin{small}
\begin{algorithmic}[1]
    \For{phase $i=c, c+1, \dots$} \Comment \textsl{$c \ge \tfrac{2\log 2}{(2-\delta)\eta}$ is a sufficiently large constant, see \eqref{eq:gamma}} \label{line:startPhase}
        \State Each node initializes its phase $i$ blacklist $BL = \emptyset$.
        \For{iteration $j=1,\dots,\lfloor e^{(1-\gamma)i}\rfloor +1$ of phase $i$} \Comment \textsl{each iteration takes $(2i + 5)$ rounds.} \label{line:startIteration}
            \State Each node $u$ initializes $\shortestPath \gets \texttt{none}$. 
            \State $u$ becomes active with probability $\frac{c_1\cdot i}{d^i}$, for a sufficiently large constant $c_1$. \label{pseudocode-line-activation-probability-and-also-the-definition-of-c1}
            
            \If{$u$ is active} 
                \State $u$ updates $\shortestPath \gets (u)$.
                \State $u$ broadcasts message $m=\langle \texttt{beacon},u,P \rangle$, $u$'s own id denotes the origin, 
                \State and $P$ is the path of nodes visited previously by $m$; i.e., $P  =  \texttt{none}$.
                \State Message $m$ is forwarded by correct nodes for exactly $(i + 2)$ rounds (see below).
            \EndIf
            
            \medskip
            
            \State \Comment \textbf{During the first $i+2$ rounds of the iteration:}
            
            \If{node $u$ receives a set of beacon messages from its neighbors} \label{line:sendMsgStart}
                \State $u$ discards all but one arbitrarily chosen message.
                \State Let $m=\langle \texttt{beacon},v,Q \rangle$ be this message; assume that it was received from neighbor $w$. \label{line:rcvd}
                \State $u$ appends $w$'s id to path $Q$ yielding $Q'$. 
                \If{we are still within the first $i$ rounds of the iteration} 
                    \State $u$ broadcasts $\langle \texttt{beacon},v,Q' \rangle$.  \label{line:sendMsgEnd}
                \EndIf
                \State Let $S$ be the set of all except the last $\lfloor \lb(1-\epsilon\rb) i\rfloor$ nodes in $Q'$, where $\epsilon$ is defined in \eqref{eq:eps}.\label{line:epsilonPrefix}
                \If{ $S \cap BL = \emptyset$} 
                    \If{$\shortestPath = \texttt{none}$}
                        \State $u$ updates $\shortestPath \gets Q'$. 
                    \EndIf
                \EndIf
            \EndIf
      
            \medskip
            
            \State \Comment \textbf{After the first $i+2$ rounds of iteration $j$:}
            \If{$\shortestPath=\texttt{none}$ at node $u$ and $u$ has not yet decided}
                \State $u$ decides on $i$.\label{line:decide}
            \EndIf
            
            \State Let $S$ be the set of all except the last $\lfloor(1-\epsilon) i\rfloor$ nodes in $\shortestPath$.
            \State $u$ adds all nodes in $S$ to blacklist $BL$. \label{line:blacklist}
        \EndFor \Comment\textsl{End of the $j^{\text{th}}$ iteration of phase $i$}

        \medskip

        \If{$u$ has not decided} 
            \State $u$ broadcasts a $\langle \texttt{continue} \rangle$ message, which is forwarded for $i+3$ rounds by other nodes. 
        \EndIf
        \State When receiving multiple $\langle \texttt{continue} \rangle$ messages simultaneously, all but one are discarded.
        \If{$v$ has decided and $v$ did not receive $\langle \texttt{continue} \rangle$ in $i+3$ rounds}
            \State $v$ \textbf{exits} the for-loop. 
        \EndIf
        \State \Comment\textsl{Iteration $j$ ends after the \texttt{continue} messages have been in transit for $i+3$ rounds.} \label{line:endIteration}
    \EndFor \Comment\textsl{End of phase $i$}

    \medskip

    \If{$v$ has decided (possibly in some earlier phase and $v$ receives a $\langle \texttt{continue} \rangle$ message}
        \State $v$ reenters the for-loop and updates its value of $i$ to the current phase value. (It can do so by keeping track of the number of rounds since starting.) \label{line:reenter}
    \EndIf
\end{algorithmic}
\end{small}
\caption{A Byzantine-Resilient Counting algorithm using messages of size $O(\log{n})$ (at most nodes), assuming $O(n^{1-\gamma})$ Byzantine nodes. Nodes do not have any other global knowledge apart from $\gamma$.} \label{alg:congest}
\end{algorithm}
\subsection{Analysis of Algorithm~2} \label{sec:congestAnalysis}

For the analysis, we assume at most $n^{1 - \gamma}$ Byzantine nodes, where $\gamma$ needs to satisfy
\begin{equation} \label{eq:gamma}
  \gamma   \geq   \frac{1}{2 - \delta}  +  \eta,
\end{equation}
for any fixed constants $0  <  \delta  \leq  \frac{1}{2}$ and $\eta > 0$. Note that the smaller $\delta$ is, the smaller $\gamma$ is. Therefore maximum Byzantine tolerance is achieved when $\delta$ is very close to (but slightly greater than) zero and $\gamma$ is very close to (but slightly greater than) $\frac{1}{2}$. In that case, the maximum Byzantine tolerance, i.e., the maximum number of Byzantine nodes that our algorithm can tolerate, boils down to $n^{\frac{1}{2} - \xi}$, as stated in Theorem~\ref{thm:congest}.

The parameter $\epsilon$ that we use to determine the distance outside of which the blacklisting becomes effective in our algorithm, is fixed as
\begin{equation} \label{eq:eps}
    \epsilon =  1 - \frac{(1-\delta)}{\log d}\gamma\text{.}
\end{equation}

Let $\goodtree  =  \good \cap \treelike$ be the set of nodes that have a sufficiently large distance to all Byzantine nodes due to being in set $\good$, and that also have the property of $d$-ary trees up to some radius of length $\frac{\log_d{n}}{10}$ (see Section \ref{sec:treelike}).

We will first study the progress of the algorithm at nodes in $\goodtree$, for the phases up to radius $\rho$, where
%\begin{equation} \label{eq:rho}
%    \rho   =   \left\lfloor  \min\lb({(1-\delta)\gamma}\log_d{n}, \frac{1}{10}\log_d{n}, \frac{\frac{1}{2}\ln{n} - 14\ln{(d - 1)}}{4\ln{(d - 1)} + c_1}\rb)\right\rfloor - 2,
%\end{equation}
\begin{equation} \label{eq:rho}
    \rho   =   \left\lfloor  \min\lb({(1-\delta)\gamma}\log_d{n}, \frac{1}{10}\log_d{n}\rb)\right\rfloor - 2,
\end{equation}
since, in phase $i$, we require the tree-like property to hold up until radius $(i + 2)$. We also recall that $c_1$ is any large constant, as defined in Line~\ref{pseudocode-line-activation-probability-and-also-the-definition-of-c1} of the pseudocode of Algorithm \ref{alg:congest}.
\subsubsection{Analysis of the early phases of the algorithm: when $i < \rho$}

We first show that, during the early phases of the algorithm, nodes in $\goodtree$ do not add corrupted information to their $\shortestPath$ variable (see Lemma \ref{lem:shortestPath}).
\begin{lemma} \label{lem:shortestPath}
    Consider any phase $i < \rho$, some iteration $j$, and some $u \in \goodtree$. At the end of iteration $j$, it holds that either $\shortestPath = \texttt{none}$ or $\shortestPath$ corresponds to a shortest path in $G$ starting at some node $v$ that generated a beacon message and ending at $u$.
\end{lemma}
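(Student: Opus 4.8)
The plan is to split the statement into two independent claims and prove each against the backdrop of the local topology that $u \in \goodtree$ sees during an early phase: (i) any beacon that reaches $u$ within the first $i+2$ rounds of iteration $j$ encodes a \emph{genuine} walk of $G$ originating at an honest node that actually generated a beacon (no corruption), and (ii) whenever $u$ sets $\shortestPath$, the stored walk is a \emph{shortest} path (minimality). Claim (i) rules out Byzantine tampering with the origin and path fields, while claim (ii) is the quantitative heart of the lemma.

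First I would fix the local picture. Since $i<\rho$, \eqref{eq:rho} gives $i+2 \le \rho+2 \le \min\!\brackets{(1-\delta)\gamma\log_d n,\ \tfrac{1}{10}\log_d n}$. The second term together with $u \in \treelike$ (Definition~\ref{defn-locally-tree-like-node-preliminaries}) forces the subgraph induced by $\ball(u,i+2)$ to be a genuine $(d-1)$-ary tree, so there is a \emph{unique} shortest path between $u$ and each node of $\ball(u,i+2)$; the first term together with $u \in \good$ (Lemma~\ref{lem:goodNodes}) forces $\ball(u,i+2)$ to be free of Byzantine nodes. For claim (i), a beacon received by $u$ in the first $i+2$ rounds has been forwarded for at most $i+2$ rounds, hence traversed a walk of length $\le i+2$ ending at $u$; every node on that walk lies within distance $i+2$ of $u$ and is therefore an honest node of the tree $\ball(u,i+2)$. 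Each such node faithfully records its own true id upon generating and appends the true sender id upon forwarding (so the sender check in Line~\ref{line:rcvd} always passes), and a forwarding node never overwrites the origin field; tracing the walk back to its first node $v$ therefore identifies $v$ as the honest node that became active and generated the beacon. A Byzantine-injected beacon cannot reach $u$ at all, since every Byzantine node lies at distance $>i+2$ from $u$ while the propagation window is only $i+2$ rounds. Thus every walk $u$ could possibly store is a real walk in $G$ from a genuine generator $v$ to $u$.

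For claim (ii) I would use the synchronous flooding first-arrival property inside the tree: a copy of $v$'s beacon reaches $u$ along its unique shortest path after exactly $\operatorname{dist}(v,u)$ rounds, so a walk stored in round $t^\ast$ has length exactly $t^\ast$ and origin $v^\ast$ with $\operatorname{dist}(v^\ast,u)\le t^\ast$. Suppose toward a contradiction that $\operatorname{dist}(v^\ast,u)=\ell^\ast<t^\ast$, i.e.\ the stored walk backtracks. The decisive step is a \emph{monotonicity of the blacklist test}, which I would prove from the cut-vertex structure of the tree: writing $S^\ast$ and $S_{\mathrm{short}}$ for the prefix sets (all but the last $\lfloor(1-\epsilon)i\rfloor$ nodes, cf.\ \eqref{eq:eps} and Line~\ref{line:epsilonPrefix}) of the stored walk and of the shortest $v^\ast$--$u$ path respectively, each node $w$ of the shortest path at distance $\ge\lfloor(1-\epsilon)i\rfloor$ from $u$ is a cut vertex, so the backtracking walk must visit $w$, and the suffix of that walk from $w$'s last occurrence to $u$ has length $\ge\lfloor(1-\epsilon)i\rfloor$; hence $w$ lies in the prefix $S^\ast$, giving $S_{\mathrm{short}}\subseteq S^\ast$. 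Since $u$ stores a walk only when $S^\ast\cap BL=\emptyset$, this yields $S_{\mathrm{short}}\cap BL=\emptyset$, so the shortest-path copy of $v^\ast$'s beacon, which arrived at the earlier round $\ell^\ast$, was \emph{itself acceptable}.

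The main obstacle I anticipate is precisely closing this contradiction, because of the per-round rule ``discard all but one arbitrarily chosen'' beacon: several beacons of equal length can reach $u$ in a single round, so I must exclude the scenario in which $u$ repeatedly discards an acceptable shortest-path beacon in favor of an unacceptable one and only later adopts a longer acceptable walk. I expect to resolve this by strengthening the invariant to ``$u$ adopts a walk in the first round in which an acceptable beacon arrives,'' and then applying the cut-vertex monotonicity above to \emph{every} origin at distance $<t^\ast$ to show that at that earliest acceptable round only shortest paths can be acceptable at all, so the adopted walk is necessarily shortest. Handling this simultaneous-arrival tie-breaking cleanly is where the care is needed; by comparison, the corruption-freeness of claim (i) and the cut-vertex inclusion $S_{\mathrm{short}}\subseteq S^\ast$ are comparatively routine once the tree-like, Byzantine-free structure of $\ball(u,i+2)$ is in place.
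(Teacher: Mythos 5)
Your claim (i) is, almost verbatim, the paper's \emph{entire} proof of this lemma: since $u \in \goodtree$ and $i < \rho$, every Byzantine node is outside $\ball(u,i+2)$, so nothing a Byzantine node injects can reach $u$ before it stops waiting for beacons in iteration $j$, and hence whatever is written into $\shortestPath$ ``corresponds to a path in $G$'' originating at a genuine beacon generator. The paper stops there; it never argues minimality, even though the lemma statement (and the later use in Lemma~\ref{lem:borderBlacklisted}) says ``shortest path.'' So the part of your write-up that you call routine is exactly what the authors prove, and the part you call the quantitative heart --- claim (ii) --- is something the paper silently omits.

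On claim (ii), the obstacle you flag is real, and your sketch does not close it. Your cut-vertex monotonicity gives $S_{\mathrm{short}}\subseteq S^{\ast}$, i.e.\ if a backtracking walk from $v^{\ast}$ passes the blacklist test then the shortest-path copy from $v^{\ast}$ would also have passed it. But ``would have passed the blacklist test'' is not ``was adopted'': that earlier copy may have been dropped by the arbitrary one-beacon-per-round tie-breaking, either at $u$ itself (discarded in favor of a simultaneously arriving beacon whose prefix \emph{is} blacklisted) or at an intermediate node, which forwards only one of the beacons it receives each round. So no contradiction follows, and the strengthened invariant you propose (``$u$ adopts a walk in the first round in which an acceptable beacon arrives'') is not implied by the code for the same reason. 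What \emph{is} true, by the flooding invariant inside the Byzantine-free ball, is that the first round $t^{\ast}$ at which $u$ receives any beacon delivers only walks of length equal to the origin's distance, hence only shortest paths; so the stored path is shortest whenever the first-arriving beacon is accepted (in particular in iteration $1$, when $BL=\emptyset$). In later iterations, with a nonempty blacklist and adversarial tie-breaking, the full ``shortest'' claim does not follow from the argument you give --- nor from the paper's. If your goal is to match the paper, claim (i) suffices; if your goal is the statement as written, the gap you identified remains open in your proposal.
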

\begin{proof}
    Since $u \in \goodtree$ and $i < \rho$, all Byzantine node are at a distance of at least $i+2$ from $u$, and hence no information injected by a Byzantine node can reach $u$ until it stops waiting for beacon messages in iteration $j$. It follows that any information that was added to $\shortestPath$ corresponds to a path in $G$.
\end{proof}

In Lemma~\ref{lem:borderBlacklisted} we show an upper bound on the number of nodes that are all located at the closest possible distance to $u \in \goodtree$ such that $u$ will blacklist them if they generate beacon messages. We note that some of or even all of such nodes may be good nodes, but that does not cause any conflicts with our argument here. We will use this lemma together with the tree-like property in to argue that the remaining non-blacklisted nodes (and their expanded neighbors) provide a sufficiently large set (see Lemma~\ref{lem:available}) for making it likely that some node generates a beacon message (see Lemma~\ref{lem:expectedNotStoppedi} and Lemma~\ref{lem:whpNotStoppedNodes}).

\begin{lemma} \label{lem:borderBlacklisted}
    Consider any phase $i < \rho$ and some good node $u \in \goodtree$ that has not yet decided at the start of $i$. For each iteration $j$, node $u$ blacklists at most one node in its $\lceil (1-\epsilon)i \rceil$-boundary $\boundary(u, \lceil(1-\epsilon)i\rceil)$ (and none of the nodes that are at a lesser distance).
\end{lemma}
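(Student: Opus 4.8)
\textbf{Proof proposal for Lemma~\ref{lem:borderBlacklisted}.}

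The plan is to analyze exactly which nodes node $u$ places on its blacklist in a single iteration $j$ and to pin down where those nodes sit relative to $u$. Recall that $u$ blacklists precisely the nodes appearing in the $\shortestPath$ variable \emph{except} for the last $\lfloor(1-\epsilon)i\rfloor$ of them (cf.\ Line~\ref{line:blacklist} and Line~\ref{line:epsilonPrefix}). By Lemma~\ref{lem:shortestPath}, since $u \in \goodtree$ and $i < \rho$, the value of $\shortestPath$ at the end of iteration $j$ is either \texttt{none} or a genuine shortest path $P = (v = p_1, p_2, \ldots, p_k = u)$ in $G$ from some beacon-originating node $v$ to $u$. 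If $\shortestPath = \texttt{none}$, then $u$ blacklists nothing and the claim holds vacuously, so I assume the second case.

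First I would observe that, because $P$ is a shortest path in $G$ ending at $u$, the node $p_m$ occupying position $m$ from the end of $P$ lies at distance exactly $k - m$ from $u$; equivalently, the last $\ell$ nodes of $P$ are exactly the nodes of $P$ lying within distance $\ell - 1$ of $u$. The blacklisted set $S$ consists of all nodes on $P$ \emph{strictly beyond} the last $\lfloor(1-\epsilon)i\rfloor$ nodes, i.e.\ the nodes $p_m$ with distance from $u$ at least $\lfloor(1-\epsilon)i\rfloor$. I then want to identify which of these blacklisted nodes sit on the boundary $\boundary(u,\lceil(1-\epsilon)i\rceil)$ (the exact $\lceil(1-\epsilon)i\rceil$-hop sphere) and show there is at most one, while also verifying that no node at lesser distance is blacklisted. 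The latter is immediate: every node within distance $< \lceil(1-\epsilon)i\rceil$ of $u$ lies among the retained suffix and is therefore never added to $BL$ in this iteration. For the ``at most one'' part, the key is that $P$ is a \emph{shortest} path, so it can intersect any fixed sphere $\boundary(u,r)$ in at most one vertex: a shortest path from $v$ to $u$ visits distinct distance-to-$u$ values $k-1, k-2, \ldots, 1, 0$ monotonically along its tail, hence meets the sphere at radius $r = \lceil(1-\epsilon)i\rceil$ in a single node. Since all of $u$'s newly blacklisted vertices in iteration $j$ come from this one path $P$, at most one of them can lie exactly on that boundary sphere.

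The main obstacle I expect is making the distance bookkeeping fully rigorous given the floor/ceiling discrepancy between $\lfloor(1-\epsilon)i\rfloor$ (used in the suffix definition) and $\lceil(1-\epsilon)i\rceil$ (used in the statement of the boundary), and confirming that the monotone-distance property of $\shortestPath$ truly holds here. The subtlety is that $\shortestPath$ is set by appending neighbor IDs as the beacon propagates, so I must argue that what $u$ records is in fact a genuine shortest path rather than a wandering walk; this is exactly what Lemma~\ref{lem:shortestPath} guarantees for $u \in \goodtree$ in phases $i < \rho$, since no Byzantine-injected (and hence possibly non-shortest) routing information can reach $u$ before it stops listening. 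With that in hand, the monotonicity of distances along $P$ and the single-sphere-intersection argument close the proof. I would present it by first invoking Lemma~\ref{lem:shortestPath} to reduce to a shortest path, then arguing the sphere-intersection count, and finally noting the lesser-distance nodes are excluded by construction.
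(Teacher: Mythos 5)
Your proof is correct and takes essentially the same approach as the paper's: both arguments hinge on the fact that $\shortestPath$ records a genuine shortest path in $G$ ending at $u$ (Lemma~\ref{lem:shortestPath}) and that such a path can meet the sphere $\boundary(u,\lceil(1-\epsilon)i\rceil)$ in at most one vertex --- the paper derives this by contradiction (a beacon travelling the strictly shorter path through the nearer of two boundary nodes would have reached $u$ first), while you derive it directly from the strict monotonicity of distance-to-$u$ along a shortest path. The floor/ceiling mismatch you flag is a genuine notational wrinkle in the paper itself, but it is immaterial to the ``at most one node blacklisted per iteration at the boundary'' conclusion that Lemma~\ref{lem:available} actually consumes.
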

\begin{proof}
    Assume towards a contradiction that, in some iteration $j$, node $u$ adds at least two nodes $w_1, w_2 \in \boundary(u,\lceil(1-\epsilon)i\rceil)$ to its blacklist. By the code of the algorithm, it follows that the ids of $w_1$ and $w_2$ must both be in $\shortestPath$ at the end of iteration $j$. Without loss of generality, suppose that $\shortestPath = (v,\dots,w_1,\dots,w_2,\dots,u)$, i.e., $v$ is the origin of the beacon message that caused the update to $\shortestPath$ in iteration $j$. Since $w_1, w_2 \in \boundary(u,\lceil(1-\epsilon)i\rceil)$ it follows that there exists a path $P_1=(w_1,\dots,u)$ of length $\lceil (1-\epsilon)i \rceil$ between $w_1$ and $u$ that does not contain $w_2$.

    However, this means that $u$ must have received a beacon message containing a path field that contains the concatenation of paths $Q' = (v,\dots,w_1)$ and $P_1$, where $|Q'| < |\shortestPath|$. This contradicts the assumption that both $w_1$ and $w_2$ are in $\shortestPath$, and completes the proof.
\end{proof}

\begin{lemma} \label{lem:available}
    Let $BL_u^*$ denote the set of nodes added to $u$'s blacklist during phase $i < \rho$ and let $A_u^*$ be the set of nodes in $\ball(u,i+2) \setminus BL_u^*$ having a shortest path to $u$ that does not traverse nodes in $BL_u^*$. Then, it holds that $|A_u^*| \ge {d^i}$.
\end{lemma}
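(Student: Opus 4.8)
The plan is to exploit the locally tree-like property to reduce the claim to a clean counting argument in a tree, and then to show that the blacklisted nodes can block only a vanishing fraction of $\ball(u,i+2)$. First I would record the structural setting. Since $u \in \goodtree$ and $i < \rho$, the subgraph induced by $\ball(u,i+2)$ is a tree rooted at $u$ (see Section~\ref{sec:treelike} and \eqref{eq:rho}), so every node $w \in \ball(u,i+2)$ has a \emph{unique} shortest path to $u$ and hence a well-defined ancestor at each distance. By the blacklisting rule (Line~\ref{line:blacklist}), only the nodes strictly beyond the $\lfloor(1-\epsilon)i\rfloor$-suffix of a recorded $\shortestPath$ are added to $BL_u^*$; thus every blacklisted node lies at distance $\ge \ell := \lfloor(1-\epsilon)i\rfloor$ from $u$. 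Consequently a node $w$ fails to lie in $A_u^*$ precisely when its unique path to $u$ meets some ancestor in $BL_u^*$, and any such $w$ is itself at distance $\ge \ell$.

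The crux is the following reduction: $w$ is blocked (i.e.\ $w \notin A_u^*$) if and only if the distance-$\ell$ ancestor $a_\ell(w)$ of $w$ belongs to $BL_u^*$. One direction is immediate since $a_\ell(w)$ is an ancestor of $w$. For the converse, suppose $w$ is blocked by some ancestor $b \in BL_u^*$ at distance $\ge \ell$. Node $b$ was blacklisted in some iteration whose $\shortestPath$ passed through $b$ along a shortest path to $u$ (Lemma~\ref{lem:shortestPath} guarantees this path is genuine, as $u \in \goodtree$ and $i<\rho$). In a tree the portion of that path from $b$ down to $u$ must coincide with the unique tree-path from $b$ to $u$, which also contains $a_\ell(w)=a_\ell(b)$; since $a_\ell(w)$ sits at distance exactly $\ell$, it lies outside the retained $\lfloor(1-\epsilon)i\rfloor$-suffix of that $\shortestPath$ and is therefore itself added to $BL_u^*$. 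Hence the blocked nodes are exactly the union of the subtrees (within radius $i+2$) rooted at the blacklisted distance-$\ell$ nodes.

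It then remains to count. By Lemma~\ref{lem:borderBlacklisted}, $u$ blacklists at most one node of $\boundary(u,\ell)$ per iteration, so across the $\lfloor e^{(1-\gamma)i}\rfloor+1$ iterations of phase $i$ there are at most $K := \lfloor e^{(1-\gamma)i}\rfloor + 1$ blacklisted distance-$\ell$ nodes. Each such node roots a subtree with at most $O(d^{\,i+2-\ell})$ nodes inside $\ball(u,i+2)$, so the number of blocked nodes is at most $O\!\left(K\,d^{\,i+2-\ell}\right) = O\!\left(e^{(1-\gamma)i}\,d^{\,\epsilon i}\right)$, using $\ell \ge (1-\epsilon)i - 1$. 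Substituting the value of $\epsilon$ from \eqref{eq:eps} gives $d^{\,\epsilon i} = e^{\,i(\log d - (1-\delta)\gamma)}$, so the blocked count is $O\!\left(e^{\,i(1-\gamma(2-\delta))}\cdot d^i\right)$; finally invoking the constraint \eqref{eq:gamma}, namely $\gamma(2-\delta) \ge 1 + \eta(2-\delta)$, bounds this by $O\!\left(e^{-\eta(2-\delta)i}\cdot d^i\right)$. Since the tree-like property gives $|\ball(u,i+2)| = \Theta(d^{\,i})$ with two full extra levels of slack over $d^i$, the blocked nodes form a vanishing fraction of the ball once $i$ is a sufficiently large constant — which is exactly why the phase counter begins at $c \ge \tfrac{2\log 2}{(2-\delta)\eta}$ (Line~\ref{line:startPhase}). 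Subtracting, $|A_u^*| \ge d^i$.

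I expect the main obstacle to be the structural reduction of the second paragraph — correctly arguing that a deep blacklisted ancestor forces the distance-$\ell$ ancestor into $BL_u^*$, which rests on shortest-path uniqueness in the tree together with Lemma~\ref{lem:shortestPath}. The subsequent arithmetic is routine once the exponents are aligned, though some care is needed with the $d$ versus $d-1$ branching bookkeeping (absorbed into the $\Theta(d^i)$ ball-size convention of Section~\ref{sec:treelike}) and with checking that the starting constant $c$ makes $e^{-\eta(2-\delta)i}$ small enough for all relevant $i$.
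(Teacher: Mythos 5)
Your proof is correct in substance and rests on the same three pillars as the paper's: the tree structure of $\ball(u,i+2)$ for $u\in\goodtree$ and $i<\rho$, Lemma~\ref{lem:borderBlacklisted} (at most one boundary node blacklisted per iteration and none closer), and the comparison between the $e^{(1-\gamma)i}+1$ iterations and the boundary size $d^{(1-\epsilon)i}=e^{(1-\delta)\gamma i}$ enforced by \eqref{eq:gamma}. The difference is the direction of the count. The paper argues positively: at least half of the $\lceil(1-\epsilon)i\rceil$-boundary survives un-blacklisted, and the pairwise-disjoint subtrees of depth $\lfloor\epsilon i\rfloor+2$ hanging below the survivors already contain $d^i$ nodes, none of which can ever be blacklisted (blacklisting one would drag its boundary ancestor into the blacklist too). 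You argue by complementary counting: you characterize the blocked set exactly as the union of subtrees below blacklisted boundary-level nodes --- your ``a deep blacklisted ancestor forces the boundary-level ancestor into $BL_u^*$'' is the mirror image of the paper's closing contradiction --- then bound that set by $O(d^i e^{-\eta(2-\delta)i})$ and subtract from the ball. Your route gives a sharper picture (the blocked fraction is exponentially small in $i$, and your observation that the starting constant $c\ge\tfrac{2\log 2}{(2-\delta)\eta}$ is precisely what makes the subtraction close is the same role this constant plays in \eqref{eq:gammaDominates}); the paper's route never needs a lower bound on $|\ball(u,i+2)|$ itself. Two points to tighten, neither fatal: (i) a node at distance exactly $\lfloor(1-\epsilon)i\rfloor$ from $u$ is the first node \emph{inside} the retained suffix, so blacklisted ancestors actually live at distance $\ge\lfloor(1-\epsilon)i\rfloor+1$ and your ``distance-$\ell$ ancestor'' must be shifted by one (the paper has the same $\lfloor\cdot\rfloor$ versus $\lceil\cdot\rceil$ wobble between Line~\ref{line:epsilonPrefix} and Lemma~\ref{lem:borderBlacklisted}, so this is cosmetic); (ii) your final subtraction needs $|\ball(u,i+2)|\ge d^i$ plus the blocked count, which does not follow from ``two extra levels'' alone once one accounts for the tree being $(d-1)$-ary rather than $d$-ary --- but the paper's own step $|\boundary(u,r)|\ge d^r$ takes the identical liberty, so you are no worse off than the source.
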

\begin{proof}
    Lemma \ref{lem:borderBlacklisted} tells us that during phase $i$, the number of nodes in $\boundary(u,\lceil(1-\epsilon)i\rceil)$ that are added to $BL_u$ is at most
    \begin{align}
        e^{(1-\gamma)i} + 1
        &\le 2e^{(1-\gamma)i} \notag\\
        &\le e^{(1-\gamma)i + \log 2}. \label{eq:boundaryblacklisted}
    \end{align}

    On the other hand,
    \begin{align}
        |\boundary(u,\lceil (1-\epsilon)i \rceil)|
        &\ge d^{(1-\epsilon)i }  \notag\\
        &= e^{(1-\epsilon)i\log d} \notag \\
        &= e^{(1-\delta)\gamma i}. \tag{by \eqref{eq:eps}}
    \end{align}

    This implies that
    \begin{align}
        \frac{|\boundary(u,\lceil (1 - \epsilon)i \rceil)|}{2}
        &\ge e^{(1-\delta)\gamma i - \log 2}, \label{eq:boundary}
    \end{align}

    To show that at most half of the nodes in the $\lceil (1-\epsilon) i \rceil$-boundary of $u$ are blacklisted, it suffices if the right-hand side of \eqref{eq:boundaryblacklisted} is upper bounded by \eqref{eq:boundary}. This is true if
    \begin{center}
        $(1-\gamma)i + \log 2 \le (1-\delta)\gamma i - \log 2$,
    \end{center}
    which holds if
    \begin{equation} \label{eq:gammaDominates}
        \gamma \ge \frac{1}{2-\delta} + \frac{2\log 2}{(2-\delta)i}\text{.}
    \end{equation}

    By the code of the algorithm, we know that $i \ge \tfrac{2\log 2}{(2-\delta)\eta}$  (see Line~\ref{line:startPhase}) and hence \eqref{eq:gammaDominates} is guaranteed by the assumed bound on $\gamma$ stated in \eqref{eq:gamma}.

    So far, we have shown that there is a set $S'$ of at least half of the nodes in the $\lceil (1 - \epsilon)i \rceil$-boundary of $u$ that are \emph{not} in the phase $i$ blacklist $BL_u^*$ of $u$, i.e.,
    \begin{equation} \label{eq:notblacklisted}
        |S'| \ge \frac{\boundary(u,\lceil(1-\epsilon)i\rceil)}{2} \ge d^{(1-\epsilon)i -\log 2 }.
    \end{equation}

    Since $i < \rho$ and $u \in \goodtree$, it follows that each node in $S'$ is the root of a $d$-ary subtree of depth at least $\lfloor \epsilon i \rfloor + 2$. By the tree-like property of $u$, we know that the sets of nodes in these trees are pairwise disjoint. Let $T$ be the set of nodes that are in these trees. By the above,
    \begin{align*}
        |A_u^*| \ge |T| &\ge \frac{\boundary(u,\lceil (1-\epsilon)i\rceil)}{2} \cdot d^{\lfloor \epsilon i\rfloor + 2} \\
        &\ge d^{\lceil (1-\epsilon)i \rceil - \log_d(2) + \lfloor \epsilon i\rfloor + 2} \tag{by \eqref{eq:boundary}} \\
        \intertext{and, assuming $\log_d(2) \le 1$, we get}
        |A_u^*| &\ge d^{\lceil (1-\epsilon)i \rceil + \lfloor \epsilon i\rfloor + 1} \\
        &\ge d^i.
    \end{align*}

    In the remainder of the proof, we argue that none of the nodes in $T$ is blacklisted by $u$.

    Consider any node $w \in T$. The only way that $w$ can be added to $BL_u^*$ is that $w \in \shortestPath$ during some iteration $j$. However, by the tree-like property, we know that any shortest path from $w$ to $u$ must pass through some node $w' \in S'$ and hence, by Lemma \ref{lem:borderBlacklisted}, $\shortestPath$ must contain $w'$. This contradicts the assumption that the nodes in $S'$ are never blacklisted, and thus it follows that none of the nodes in $T$ end up in $u$'s phase $i$ blacklist $BL_u^*$.
\end{proof}
We now show that a large fraction of the nodes in $\goodtree$ do not decide in the first $o(\log n)$ phases.

\begin{lemma} \label{lem:expectedNotStoppedi}
    For any $u \in \goodtree$, $\Pr[u\text{ decides in phase }i]  \leq  \exp(-\frac{c_1i}{2})$.
\end{lemma}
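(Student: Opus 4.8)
\textbf{Proof proposal for Lemma~\ref{lem:expectedNotStoppedi}.}

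The plan is to bound the probability that $u \in \goodtree$ decides in phase $i$ by analyzing a \emph{single} representative iteration and then observing that $u$ decides in phase $i$ only if there exists \emph{some} iteration $j$ in which its $\shortestPath$ variable remains \texttt{none} after the first $i+2$ rounds (cf.\ Line~\ref{line:decide}). Since $u \in \goodtree$ and $i < \rho$, Lemma~\ref{lem:shortestPath} guarantees that no Byzantine-injected information reaches $u$ during these rounds, so whether $u$ sets $\shortestPath$ depends only on whether some good node in $u$'s relevant neighborhood becomes active and whether the resulting beacon reaches $u$ along a non-blacklisted shortest path. The key structural input is Lemma~\ref{lem:available}, which tells us that the set $A_u^*$ of non-blacklisted nodes reachable from $u$ via a clean shortest path within $\ball(u,i+2)$ satisfies $|A_u^*| \ge d^i$.

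First I would fix an iteration $j$ and lower-bound the probability that \emph{at least one} node in $A_u^*$ becomes active. Each good node becomes active independently with probability $\tfrac{c_1 i}{d^i}$ (Line~\ref{pseudocode-line-activation-probability-and-also-the-definition-of-c1}), and an active node $v \in A_u^*$ produces a beacon that, by the clean-shortest-path property of $A_u^*$ and the tree-like property, propagates to $u$ within $i+2$ rounds without being filtered out by the blacklist check in Line~\ref{line:epsilonPrefix}. Hence the probability that $u$ sets $\shortestPath$ in iteration $j$ is at least
\begin{align*}
    1 - \br{1 - \tfrac{c_1 i}{d^i}}^{|A_u^*|} \ge 1 - \br{1 - \tfrac{c_1 i}{d^i}}^{d^i} \ge 1 - e^{-c_1 i},
\end{align*}
so the probability that $u$ \emph{fails} to set $\shortestPath$ in any fixed iteration is at most $e^{-c_1 i}$. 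The subtle point I must handle carefully is that the blacklist $BL_u^*$ grows across iterations, so $A_u^*$ is technically a phase-level quantity; the clean statement of Lemma~\ref{lem:available} is that $|A_u^*| \ge d^i$ even after accounting for \emph{all} blacklisting in the phase, which is exactly what makes the per-iteration bound uniform. I would also need the independence of activation choices across the relevant nodes, which holds since these are private coin flips of good nodes (and the Byzantine adversary's omniscience does not help it here, because the nodes in $A_u^*$ are good and beyond Byzantine reach for $i < \rho$).

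Next I would convert the single-iteration bound into the phase-level bound. Node $u$ decides in phase $i$ only if some iteration fails to set $\shortestPath$; but rather than a union bound over the $\lfloor e^{(1-\gamma)i}\rfloor + 1$ iterations (which would go the wrong way), the correct observation is that $u$ decides in the \emph{first} iteration in which it sees no valid beacon, and the relevant event for \emph{not} deciding through the whole phase is that every iteration succeeds. To get the stated bound $\Pr[u\text{ decides in phase }i] \le e^{-c_1 i/2}$, I would argue that the probability $u$ decides is at most the probability that its \emph{first} iteration already fails (since blacklisting only makes later iterations harder, the dominant contribution and the cleanest bound come from iteration $1$, where $BL_u^* = \emptyset$ and $|A_u^*| \ge d^i$ holds most comfortably). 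This yields a failure probability of at most $e^{-c_1 i}$, which is stronger than the claimed $e^{-c_1 i/2}$; the extra factor of $2$ in the exponent gives slack to absorb lower-order effects such as the ceiling/floor rounding in the activation count or a mild loss from the $\br{1-x}^{1/x} \le e^{-1}$ approximation when $c_1 i / d^i$ is not infinitesimally small.

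The main obstacle I anticipate is making the propagation-and-blacklisting argument fully rigorous: I must verify that an active node $v \in A_u^*$ actually causes $u$ to set $\shortestPath$ rather than having its beacon discarded. This requires checking that the $\lfloor(1-\epsilon)i\rfloor$-suffix of the path field carrying $v$'s beacon is disjoint from $BL_u^*$ (so the test $S \cap BL = \emptyset$ in Line~\ref{line:epsilonPrefix} passes), which is precisely guaranteed by the definition of $A_u^*$ as nodes reachable via a \emph{clean} shortest path. I would also need to confirm that simultaneous beacon arrivals (which $u$ resolves by keeping one arbitrarily) do not sabotage the event, but since setting $\shortestPath$ to \emph{any} valid beacon suffices to avoid deciding, this is harmless. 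Marshalling Lemmas~\ref{lem:shortestPath}, \ref{lem:borderBlacklisted}, and \ref{lem:available} into a clean ``with probability $\ge 1 - e^{-c_1 i}$, $u$ sets $\shortestPath$ in a given iteration'' statement is where the bulk of the care lies.
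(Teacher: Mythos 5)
Your per-iteration analysis matches the paper's: both invoke Lemma~\ref{lem:available} to get $|A_u^*|\ge d^i$ and bound the probability that no node of $A_u^*$ activates in a fixed iteration by $\left(1-\tfrac{c_1 i}{d^i}\right)^{d^i}\le e^{-c_1 i}$. The genuine gap is in how you aggregate over iterations. The event ``$u$ decides in phase $i$'' is the union, over the $\lfloor e^{(1-\gamma)i}\rfloor+1$ iterations, of the events $F_j=$ ``iteration $j$ fails to set $\shortestPath$,'' and the paper handles this with exactly the union bound you dismiss as ``going the wrong way'': $\Pr[\bigcup_j F_j]\le (e^{(1-\gamma)i}+1)e^{-c_1 i}\le 2e^{-(c_1-1)i}\le e^{-c_1 i/2}$. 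Your replacement step --- bounding the decision probability by the probability that the \emph{first} iteration fails --- is false: activation coins are fresh in every iteration, so $u$ can succeed in iteration $1$ and fail in a later one, and the union $\bigcup_j F_j$ is not contained in $F_1$. Your heuristic that ``blacklisting only makes later iterations harder'' would at best give $\Pr[F_j]\ge\Pr[F_1]$, which is the wrong direction for the containment you need. Consequently your claimed bound $e^{-c_1 i}$ is not established, and your explanation of the factor $2$ in the exponent (rounding and approximation slack) is off the mark: the loss from $e^{-c_1 i}$ to $e^{-c_1 i/2}$ comes precisely from the multiplicative factor $e^{(1-\gamma)i}$ in the union bound over the exponentially many iterations, absorbed by taking $c_1$ large.

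A secondary omission: you treat every node of $A_u^*$ as flipping its activation coin in phase $i$, but nodes that decided in earlier phases exit the for-loop and stop generating beacons. The paper closes this by noting that, since $u$ is undecided at the start of phase $i$, it broadcast a \texttt{continue} message at the end of phase $i-1$ that was forwarded for $(i-1)+3=i+2$ rounds and hence reached all of $A_u^*\subseteq\ball(u,i+2)$, causing any already-decided node there to reenter the loop (Line~\ref{line:reenter}) and attempt activation. Without this step the count of $d^i$ potential beacon generators is not justified.
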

\begin{proof}
    Consider some $u \in \goodtree$ that has not yet decided at the start of phase $i$. By Lemma~\ref{lem:available}, we know there is a set $A_u^*$ of at least $d^i$ nodes such that a beacon message generated by a node in $A_u^*$ is guaranteed to reach $u$ within $(i + 2)$ rounds. By assumption, $u$ has not yet decided at the start of phase $i$, and hence it must have sent out a $\langle continue \rangle$ message at the end of the previous phase $(i - 1)$, which was forwarded for $(i-1)+3 = i+2$ rounds. Given that $A_u^* \subseteq \ball(u,i+2)$, it follows that this beacon message must have reached all nodes in $A_u^*$. Hence, any node in $A_u^*$ that has already decided (possibly during a much earlier phase), will nevertheless execute the for-loop and try to generate beacon messages in the current phase $i$. It follows that $u$ does not decide if at least one node in $A_u^*$ generates a beacon message.

    The probability that none of the nodes in $A_u^*$ becomes active during an iteration of phase $i$ is at most $\lb( 1 - \frac{c_1\ i}{d^i} \rb)^{d^i} \le e^{-c_1 i}$. By taking a union bound over the $e^{(1-\gamma)i}+1$ iterations of phase $i$, it follows that
    \begin{align*}
        \Pr[u\text{ decides in phase }i]
        &\le \lb( e^{(1-\gamma)i} + 1 \rb) e^{-c_1 i } \\
        &=  e^{(1-\gamma - c_1)i}  + e^{-c_1 i } \\
        &\le 2 e^{-(c_1 - 1)i}  &\tag{since $1-\gamma\le 1$}\\
        &\le e^{-(c_1/2)i}\text{.}
    \end{align*}
\end{proof}

Lemma \ref{lem:expectedNotStoppedi} promises us that any individual node has a small probability of error when $i < \rho$. So the expected number of nodes to make an error is also small. We, however, want to show a high probability bound on the number of nodes that make a mistake. In order to show that, we proceed along the usual way of formulating an indicator random variable and then computing the expectation of the sum of the individual indicator random variables by using the principle of linearity of expectation. We show the high probability bound by using the method of \emph{bounded differences} (Azuma's Inequality, more specifically).

Now to the formal description. Let $Y^v_i$ be an indicator random variable which is $1$ if and only if $v$ decides $i$ to be a correct estimate of $\log{n}$. Lemma \ref{lem:expectedNotStoppedi} shows that
$\Pr[Y^v_i = 1]  <  \exp(-\frac{c_1i}{2})$. Now let $$Y_i  =  \sum_{v\in V}{Y^v_i}\text{.}$$ That is, $Y_i$ denotes the number of nodes that decide \emph{wrongly} in the $i^{\text{th}}$ phase. We recall once again that here we are interested only in the case where $i < \rho$. Then $Y_i$ cannot be too large, i.e., not too many nodes can decide wrongly in one phase.
\begin{lemma} \label{lem:whpNotStoppedNodes}
	$\Pr[Y_i   >   2n \cdot \exp(-\frac{c_1i}{2})]   <   \exp(-\frac{\sqrt{n}}{2})$ if $i < \rho$.
\end{lemma}
\begin{proof}
\begin{align}
	&\E[Y_i]  =  \E[\sum_{v\in V}{Y^v_i}]  =  \sum_{v\in V}\E[{Y^v_i}] \tag{by linearity of expectation} \nonumber \\
	&=  \sum_{v\in V}{\Pr[Y^v_i = 1]} \tag{since $Y^v_i$ is an indicator random variable} \nonumber \\
	&<  \sum_{v\in V}\exp(-\frac{c_1i}{2}) \tag{from Lemma \ref{lem:expectedNotStoppedi}} \nonumber \\
	&=  n \cdot \exp(-\frac{c_1i}{2})\text{.} \label{equation-congest-algorithm-Azumas-expected-number-of-wrong-deciders-in-the-lower-end}
\end{align}

Two vertices $v$ and $w$ are \emph{independent} if their $(i + 3)$-distance neighborhoods do not intersect, i.e., if the distance between them is $\geq  2i + 7$. In other words, $v$ going defective can affect only those vertices that are within a distance of $(2i + 6)$ to $v$. In a $(d + 1)$-regular graph, the number of vertices that are within a $(2i + 6)$-distance of $v$ is at most $d^{(2i + 7)}$. By the Azuma-Hoeffding Inequality \cite[Theorem 5.1]{Dubhashi_2009},
\begin{align}
	&\Pr[Y_i - \E[Y_i]   \geq   n \cdot \exp(-\frac{c_1i}{2})] \nonumber  \\
	&\leq   \exp(-\frac{{(n \cdot \exp(-\frac{c_1i}{2}))}^2}{2n \cdot d^{2(2i + 7)}}) \nonumber \\
	&=   \exp(-\frac{n \cdot \exp(-c_1i)}{2d^{4i + 14}}) \nonumber \\
	&=   \exp(-\frac{n}{2 \cdot \exp(c_1i) \cdot d^{4i + 14}}) \nonumber \\
	&=   \exp(-\frac{n}{2e^k})\text{,} \label{equation-congest-algorithm-azumas-first-equation}
\end{align}
say, where
\begin{equation} \label{equation-temporary-variable-k}
    k   \defeq   (4i + 14)\ln{d}  +  c_1i\text{.}
\end{equation}

But we have from Equation \ref{eq:rho} that
\begin{align*}
    &i   <   \rho   <   \frac{\frac{1}{2}\ln{n} - 14\ln{d - 1}}{4\ln{d} + c_1}\\
    &\implies   4i\ln{d} + c_1i   <   \frac{1}{2}\ln{n} - 14\ln{d}\\
    &\text{or, }   (4i + 14)\ln{d}  +  c_1i   <   \frac{1}{2}\ln{n}\\
    &\text{or, }   k   <   \frac{1}{2}\ln{n} \tag{follows from the definition of $k$ in Equation \ref{equation-temporary-variable-k}}
\end{align*}

Substituting the value of $k$ in Equation \ref{equation-congest-algorithm-azumas-first-equation}, we get that
\begin{align}
    &\Pr[Y_i - \E[Y_i]   \geq   n \cdot \exp(-\frac{c_1i}{2})] \nonumber \\
    &<   \exp(-\frac{n}{2\exp(\frac{1}{2}\ln{n})}) \nonumber \\
    &=   \exp(-\frac{n}{2\sqrt{n}})   =   \exp(-\frac{\sqrt{n}}{2})\text{.} \label{equation-congest-algorithm-azumas-final-probability-bound}
\end{align}

Equation \ref{equation-congest-algorithm-azumas-final-probability-bound} combined with Equation \ref{equation-congest-algorithm-Azumas-expected-number-of-wrong-deciders-in-the-lower-end} yields the result.
\end{proof}

As an immediate consequence of Lemma~\ref{lem:whpNotStoppedNodes}, we can take a union bound over the phases up to $i  =  c, c + 1, c + 2, \ldots, \rho$ to obtain that, whp, the total number of nodes in $\goodtree$ that decide early is at most
\begin{equation} \label{eq:earlyDeciders}
    \sum_{i=1}^{\rho} \frac{2n}{e^{(c_1/2)i}}   \le   \frac{2n}{e^{c_1/2 } - 1}.
\end{equation}
\subsubsection{Analysis of the later phases of the algorithm: when $i = \lceil \log{n} \rceil$}

Once a node $u \in \goodtree$ proceeds beyond phase $\rho$, it has obtained a sufficiently good estimate of $\log{n}$, and hence our goal is to show that it is likely to decide. For this part of the analysis, we need to deal with the possibility that conflicting information originating at Byzantine nodes reaches $u$ during an iteration. However, in the following analysis, we show that $u$ is unlikely to increase its phase counter above $\lceil \log n\rceil$.

\begin{lemma} \label{lem:upperEnd}
    Consider phase $i = \lceil \log{n} \rceil$. The following hold with probability at least $1 - O(\frac{1}{n})$:
    \begin{enumerate}
        \item No good node becomes active.
        \item Every node in $\goodtree$ decides at the end of phase $i = \lceil \log n \rceil$.
    \end{enumerate}
\end{lemma}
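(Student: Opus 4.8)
The plan is to prove the two items in sequence, noting that item~2 becomes \emph{deterministic} once item~1 holds, so that essentially all of the $1 - O(1/n)$ failure probability is charged to item~1. Throughout I condition on the (whp) structural guarantees of the $H(n,d)$ graph supplied by Lemma~\ref{lem:goodNodes} and Lemma~\ref{lemma-most-nodes-are-locally-tree-like-preliminaries}; in particular that $|\goodtree| = n - o(n)$ and that every node of $\goodtree$ is separated from $\byz$ by distance at least $(1-\delta)\gamma\log_d n$ (this quantity is exactly $(1-\epsilon)\lceil\log n\rceil$ up to lower-order terms, by the choice of $\epsilon$ in \eqref{eq:eps}).

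For item~1 I would simply bound the expected number of good-node activations over the entire phase. In phase $i = \lceil\log n\rceil$ each good node becomes active in a given iteration with probability $\tfrac{c_1 i}{d^i}$ (Line~\ref{pseudocode-line-activation-probability-and-also-the-definition-of-c1}), and the phase runs $\lfloor e^{(1-\gamma)i}\rfloor + 1 \le 2n^{1-\gamma}$ iterations (Line~\ref{line:startIteration}). Since $i = \lceil\log n\rceil$ makes $d^i = n^{\Theta(1)}$ with exponent $\ln d > 2$ for $d$ a sufficiently large constant, a union bound over the $\le n$ good nodes and all iterations gives
\[
n \cdot 2 n^{1-\gamma} \cdot \frac{c_1 i}{d^{i}} \;=\; O\!\left(n^{\,2-\gamma-\ln d}\log n\right) \;=\; O(1/n),
\]
once $\ln d \ge 3 - \gamma$. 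Markov's inequality then yields item~1 with probability $1 - O(1/n)$.

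For item~2 I condition on item~1, so that every beacon message present in phase $\lceil\log n\rceil$ must have been injected by some Byzantine node. Fix an undecided $u \in \goodtree$. The core claim is that $u$ can set its $\shortestPath$ variable in at most $B(n)$ iterations. Whenever $u$ accepts a beacon, I trace its path field backwards from $u$: because honest forwarders append the true id of the node they received from, the portion of the path after the last Byzantine node $b$ on the actual dissemination route is a genuine walk in $G$ of length at least $\mathrm{dist}(b,u) \ge (1-\epsilon)i$. Hence $b$'s id lies strictly outside the $\lfloor(1-\epsilon)i\rfloor$-suffix, i.e.\ inside the blacklisted prefix $S$. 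Acceptance requires $S \cap BL = \emptyset$, so this $b$ was not previously blacklisted and is added to $BL$ at the end of the iteration (Line~\ref{line:blacklist}). Thus each acceptance permanently removes a \emph{fresh} Byzantine node, so after at most $B(n)$ acceptances all of $\byz$ is blacklisted and every later beacon is rejected. As phase $\lceil\log n\rceil$ contains $\lfloor e^{(1-\gamma)i}\rfloor + 1 \ge n^{1-\gamma} > n^{\frac{1}{2}-\xi} = B(n)$ iterations --- the strict inequality holding because \eqref{eq:gamma} forces $\gamma < \tfrac{1}{2} + \xi$ when $\delta,\eta$ are small --- there is an iteration in which $\shortestPath$ stays \texttt{none}, forcing $u$ to decide (Line~\ref{line:decide}). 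This argument invokes no randomness beyond item~1, so item~2 inherits the $1 - O(1/n)$ bound.

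The main obstacle is the path-accounting in the blacklisting step, and the fact that it is \emph{tight} at $i = \lceil\log n\rceil$: the separation $(1-\epsilon)i$ of $u$ from $\byz$ equals the suffix length, so naively the responsible Byzantine node sits right at the prefix/suffix boundary and a one-entry miscount would let it evade blacklisting. Resolving this requires exploiting that the honest suffix is a walk (never shorter than the graph distance), that $\delta>0$ and $\eta>0$ give genuine slack between $(1-\delta)\gamma\log_d n$ and the floor $\lfloor(1-\epsilon)i\rfloor$, and that $\good$ may be defined with separation strictly exceeding $(1-\epsilon)i$ while still removing only an $o(n)$-ball around $\byz$. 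One also has to check that the argument survives several colluding Byzantine nodes on one path (only the last one matters) and that a Byzantine node cannot fabricate a path so short that $S$ is empty; these are where the care is needed, while the counting and the expectation bound are routine bookkeeping.
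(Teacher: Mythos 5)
Your proposal follows essentially the same route as the paper's proof: part~(1) by a union bound (equivalently, Markov on the expected number of activations) over all good nodes and all $\lfloor e^{(1-\gamma)i}\rfloor+1$ iterations using $d^i \ge n^{\log d}$, and part~(2), conditioned on part~(1), by the pigeonhole argument that each accepted beacon forces a not-yet-blacklisted Byzantine id into the blacklisted prefix (its honest neighbor having appended its true, unfakeable id outside the $\lfloor(1-\epsilon)i\rfloor$-suffix), so that after at most $B(n) < $ (number of iterations) acceptances some iteration leaves $\shortestPath = \texttt{none}$ and $u$ decides. Your explicit freshness accounting and your flagging of the off-by-one at the prefix/suffix boundary are, if anything, slightly more careful than the paper's own write-up, which handles both points only implicitly.
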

\begin{proof}
Let $Active(u,j)$ be the event that a good node $u$ becomes active in iteration $j$ of phase $i$. For any $u \in \goodtree$ and any iteration $j$, it holds that
\begin{equation*}
    \Pr[ Active(u,j) ] = \frac{c_1\cdot i}{d^i} \le \frac{c_1\log n}{n^{\log d}}\text{.}
\end{equation*}
By taking a union bound over all good nodes and over all $e^{(1-\gamma)i}+1$ iterations, we get
\begin{align}
    \Pr[ \exists u\colon Active(u,j) ]
    &\le \frac{c_1\log n}{n^{\log(d) - 1}}; \notag\\
    \Pr[\exists j\ \exists u\colon Active(u,j)]
    &\le \lb( e^{\log(n)(1-\gamma)}+1 \rb) \frac{c_1\log n}{n^{\log(d) - 1}} \notag\\
    &\le 2n^{1-\gamma} \frac{c_1\log n}{n^{\log(d) - 1}} \notag\\
    &\le \frac{2c_1\log n}{n^{\log d - 2}}. \notag
\end{align}
This completes the proof of Part (1), assuming $\log{d} \geq 4$.\\

To show Part (2), we \emph{condition} on Part (1) being true, and assume towards a contradiction that there is an undecided node $u \in \goodtree$ that does not decide in phase $i$. We will argue that $u$ blacklists at least one Byzantine node in each iteration $j$ of phase $i$.

By the code of the algorithm, $u$ has set $\shortestPath \gets (v_1, v_2, \ldots, v_k)$, for some $k  \leq  i + 1$, which is the path information of the first beacon message that it received in iteration $j$.

Note that we cannot be sure that $v_1$ is the id of a Byzantine node, as it could have happened that some other Byzantine node $v_\ell$ $(\ell \in [2,k]$) tampered with the prefix $(v_1,\dots,v_{\ell-1})$ before that message reached $u$. However, by Lemma \ref{lem:goodNodes}, we know any Byzantine node is at least $\lfloor {(1-\delta)\gamma}\log_d n\rfloor$ hops away from $u$. In particular, this guarantees that the path suffix $P'$, which consists of the last $\lfloor {(1-\delta)\gamma}\log_d n\rfloor$ nodes in path $P$, contains only ids of good nodes.  Hence at least one Byzantine node's id must be in the path prefix prefix $Q$ (where we obtain $Q$ by removing $P'$ from $P$), as we have assumed that only Byzantine nodes generate beacon messages at this point.

We will now argue that all nodes in $Q$ are blacklisted by $u$. By the description of the algorithm, $u$ blacklists only nodes that have a distance of at least $\lfloor (1-\epsilon) i \rfloor$ from $u$. We observe that
\begin{align*}
    \lfloor (1-\epsilon)i \rfloor   &\leq (1-\epsilon )i   =   {(1-\delta)\gamma}\log_d n. \tag{by \eqref{eq:eps}}
\end{align*}

It follows that the entire prefix $Q$ will be blacklisted. Thus, we have shown that $u$ does not accept a beacon message that visits any of the nodes in $Q$ in a future iteration of this phase.

By the above reasoning, we know that $u$ blacklists at least $1$ Byzantine node in each iteration. Recall that $u$ executes $e^{(1-\gamma)i}+1 \ge n^{1-\gamma}+1$ iterations in phase $i$.  Given that there are only $n^{1-\gamma}$ Byzantine nodes in the network, it follows that there exists an iteration in which $u$ does not set its variable $\shortestPath$ to a value different from \texttt{none} since all Byzantine nodes are already blacklisted at that point. We conclude that $u$ decides in Line \ref{line:decide}, yielding a contradiction.
\end{proof}

\begin{lemma} \label{lem:time}
    At least $(1 - \beta)n$ nodes decide within $O(n^{1-\gamma}\log^2 n)$ rounds of the algorithm.
\end{lemma}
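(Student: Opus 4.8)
The plan is to assemble the two regimes already analyzed --- the early phases ($i < \rho$, via Lemma \ref{lem:whpNotStoppedNodes} and \eqref{eq:earlyDeciders}) and the terminal phase ($i = \lceil \log n \rceil$, via Lemma \ref{lem:upperEnd}) --- into a single statement about how many nodes have decided and how many rounds have elapsed. The decided set I would track is $\goodtree$ with the early deciders removed. By Lemma \ref{lemma-most-nodes-are-locally-tree-like-preliminaries} and Lemma \ref{lem:goodNodes} we have $|\goodtree| = |\good \cap \treelike| \ge n - o(n)$. By \eqref{eq:earlyDeciders}, whp the number of $\goodtree$ nodes that decide during phases $i < \rho$ is at most $\tfrac{2n}{e^{c_1/2}-1}$, and by Lemma \ref{lem:upperEnd}(2) every surviving node in $\goodtree$ decides no later than the end of phase $\lceil \log n \rceil$. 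Choosing the constant $c_1$ large enough --- which only raises the activation probability and leaves the per-iteration round count untouched --- makes $\tfrac{2n}{e^{c_1/2}-1} \le \tfrac{\beta}{2} n$; together with the $o(n)$ slack this yields a set of at least $(1-\beta)n$ nodes that have all decided by the end of phase $\lceil \log n \rceil$.

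Next I would bound the rounds elapsed up to the end of phase $\lceil \log n \rceil$. Phase $i$ consists of $\lfloor e^{(1-\gamma)i}\rfloor + 1$ iterations of $(2i+5)$ rounds each, plus an $O(i)$-round \texttt{continue} stage, so it uses $O\big((e^{(1-\gamma)i}+1)\,i\big)$ rounds. Summing over $i = c, \ldots, \lceil \log n \rceil$ gives $\sum_i O\big(i\,e^{(1-\gamma)i}\big)$, a series with base $e^{1-\gamma} > 1$ that is dominated by its final term; hence the total is $O\big(\log n \cdot e^{(1-\gamma)\log n}\big) = O(n^{1-\gamma}\log n)$, which lies within the claimed bound $O(n^{1-\gamma}\log^2 n)$.

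Finally I would collect the failure probabilities. The early-decider bound rests on a union bound of Lemma \ref{lem:whpNotStoppedNodes} over the $O(\log n)$ phases below $\rho$, each failing with probability at most $\exp(-\sqrt n /2)$, while Lemma \ref{lem:upperEnd} holds with probability $1 - O(\tfrac{1}{n})$; a last union bound shows all of these events hold simultaneously whp, giving the $(1-\beta)n$ deciders within the stated time. The step I expect to be the main obstacle is the bookkeeping in the first paragraph: one must check that the $o(n)$ losses from $\good$ and $\treelike$, together with the tunable early-decider fraction, can be driven below the target $\beta$ uniformly, and in particular that the value of $c_1$ selected here is consistent with (and does not clash with) the role $c_1$ plays in the definition of $\rho$ in \eqref{eq:rho} and in the preceding lemmas.
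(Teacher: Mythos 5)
Your proposal is correct and follows essentially the same route as the paper: invoke Lemma~\ref{lem:upperEnd}(2) to conclude that every node of $\goodtree$ has decided by the end of phase $\lceil \log n\rceil$, note $|\goodtree| \ge n - o(n) \ge (1-\beta)n$, and then sum the round counts of the phases. In fact your accounting of the running time is more careful than the paper's own: the paper's proof asserts that phase $i$ consists of $2i+3$ rounds and concludes a total of $O(\log^2 n)$ rounds, apparently omitting the $\lfloor e^{(1-\gamma)i}\rfloor+1$ iterations per phase, whereas your sum $\sum_i O\bigl(i\,e^{(1-\gamma)i}\bigr) = O(n^{1-\gamma}\log n)$ is what actually justifies the $O(n^{1-\gamma}\log^2 n)$ bound stated in the lemma. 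Your exclusion of the early deciders is harmless but unnecessary for this particular statement (they decide even earlier, so they still count toward the $(1-\beta)n$ nodes deciding within the time bound); that bookkeeping belongs to the approximation guarantee handled in the proof of Theorem~\ref{thm:congest}.
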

\begin{proof}
    Lemma~\ref{lem:upperEnd}(b) tells us that every node in $\goodtree$ decides by phase $i=\lceil \log n \rceil$ with high probability. By the description of the algorithm, each phase $i$ consists of $2i+3$ rounds and hence the total number of rounds executed until that point is $O(\log^2{n})$.
\end{proof}

We now combine the previous lemmas to complete the proof of Theorem~\ref{thm:congest}.
\begin{proof}[Proof of Theorem \ref{thm:congest}]
    We focus on nodes in $\goodtree$. From Lemma~\ref{lem:whpNotStoppedNodes} we know that $\Omega(n)$ nodes in $\goodtree$ will proceed to at least phase $\rho = \Omega(\log{n})$ before deciding and thus we can set the parameter $\beta$ of the theorem statement accordingly. On the other hand, Lemma \ref{lem:upperEnd} guarantees that all of these nodes decide by the end of phase $\lceil \log n \rceil$ with high probability.

    The claim on the running time follows immediately from Lemma \ref{lem:time}.
\end{proof}

\begin{remark} \label{remark-approximation-factor-for-congest-algorithm-is-not-universal}
    The approximation factor mentioned in Theorem \ref{thm:congest} is not universal. It may be different for different nodes, but in all cases it is bounded by the quantity $\lceil\frac{\log{n}}{\rho}\rceil$, where $\rho$ is as defined in Equation \eqref{eq:rho}. Also, while the estimates may vary by a constant factor, it holds with high probability that all the nodes in $\goodtree$ have estimates that are upper-bounded by $\lceil\log{n}\rceil$, i.e., the estimates of $\log{n}$ are upper-bounded by an additive constant term (which is $1$ basically).
\end{remark}

\begin{remark}
    We point out that a node in $\goodtree$ may reenter the for-loop and participate in sending out beacons even after it has already decided (see Line~\ref{line:reenter}). However, in Corollary~\ref{lem:benign} below, we show that in the benign case where there are no Byzantine nodes in the network, the algorithm computes $\log(n)$ exactly and all nodes terminate.
\end{remark}

\begin{corollary} \label{lem:benign}
    Suppose that all nodes are good. Then the algorithm terminates in $O(\log(n))$ rounds, and whp,  $\Omega(n)$ nodes decide on $\lceil\log(n)\rceil$ and stop sending messages.
\end{corollary}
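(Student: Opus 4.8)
The plan is to specialize the machinery of Section~\ref{sec:congestAnalysis} to the case $\byz = \emptyset$. With no Byzantine nodes, the set $\byz$ in Lemma~\ref{lem:goodNodes} is empty, so (taking $F = \emptyset$) that lemma yields $|\good| \ge n - o(n)$ and its neighborhood conditions hold trivially. Hence $\goodtree = \good \cap \treelike$ differs from $\treelike$ by only $o(n)$ nodes, and by Lemma~\ref{lemma-most-nodes-are-locally-tree-like-preliminaries} we have $|\goodtree| \ge n - O(n^{0.8}) - o(n) = \Omega(n)$ with high probability. Every lemma proved for $\goodtree$ in Section~\ref{sec:congestAnalysis} therefore applies to these $\Omega(n)$ nodes verbatim.

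Next I would sandwich the phase at which a node of $\goodtree$ decides. For the upper end I would invoke Lemma~\ref{lem:upperEnd}: since $\byz=\emptyset$, part (1) says that at phase $i = \lceil \log n\rceil$ no node becomes active with probability $1 - O(\tfrac{1}{n})$, so no beacon message is ever in transit; consequently every still-undecided node finds $\shortestPath = \texttt{none}$ already in the first iteration and decides on $\lceil \log n\rceil$ via Line~\ref{line:decide}. This shows both that no estimate exceeds $\lceil \log n\rceil$ and that all of $\goodtree$ has decided by the end of phase $\lceil \log n\rceil$ (Lemma~\ref{lem:upperEnd}(2)). For the lower end I would use Lemma~\ref{lem:whpNotStoppedNodes} together with the union bound \eqref{eq:earlyDeciders}, which shows that at most a $\tfrac{2}{e^{c_1/2}-1}$ fraction of $\goodtree$ decides before phase $\rho = \Omega(\log n)$; taking $c_1$ large makes this fraction smaller than the target $\beta$, so $\Omega(n)$ nodes decide with an estimate that is $\Theta(\log n)$ and, as recorded in Remark~\ref{remark-approximation-factor-for-congest-algorithm-is-not-universal}, bounded above by $\lceil \log n\rceil$.

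For termination I would use the fact that a \texttt{continue} message can only originate at an undecided node and that a decided node re-enters the for-loop solely upon receiving such a message (Line~\ref{line:reenter}). Because all nodes are good, Lemma~\ref{lem:upperEnd}(2) guarantees that every node has decided by the end of phase $\lceil \log n\rceil$ with high probability; from that point on no node emits a \texttt{continue} message, the outstanding ones drain within $O(\log n)$ rounds, each decided node then fails to receive one and exits the for-loop, and so all nodes stop sending messages. The running-time claim follows because the deciding phase is $O(\log n)$ by Lemma~\ref{lem:upperEnd}, and the per-phase round accounting is the one used in Lemma~\ref{lem:time}.

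The step I expect to be the main obstacle is the termination argument, specifically ruling out a feedback loop in the reentry dynamics: a node that decided early (around phase $\rho$) can be repeatedly revived by \texttt{continue} messages emitted by slower neighbors still climbing toward phase $\lceil \log n\rceil$, and I must argue this cascade is monotone and halts exactly when the last node decides rather than regenerating fresh beacons indefinitely. This is precisely where the absence of Byzantine nodes is essential: once the high-probability event of Lemma~\ref{lem:upperEnd} holds, no good node is active at phase $\lceil \log n\rceil$, so no new beacon --- and hence no new reason for any node to remain undecided --- can be injected, which is what lets the reentry process terminate.
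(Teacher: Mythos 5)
Your proposal is correct and follows essentially the same route as the paper's own (much terser) proof: invoke Lemma~\ref{lem:whpNotStoppedNodes} to show $\Omega(n)$ nodes survive the early phases, Lemma~\ref{lem:upperEnd} to show no good node generates a beacon at phase $\lceil\log n\rceil$ so everyone decides there, and conclude that no \texttt{continue} messages are emitted and all nodes exit the for-loop. Your added detail on the reentry dynamics and the draining of outstanding \texttt{continue} messages is a reasonable elaboration of what the paper leaves implicit, not a different argument.
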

\begin{proof}
    Lemmas \ref{lem:whpNotStoppedNodes} and \ref{lem:upperEnd} tell us that $\Omega(n)$ nodes will proceed to phase $i = \lceil \log(n) \rceil$. Moreover, none of the good nodes will generate a beacon message with high probability at that point. Thus no node will send a continue message and all nodes will exit the for-loop.
\end{proof}

%%%%%%%%%%%%%%%%%%%%%%%%%%%%%%%%%%%%%%%%%%%%%%%%%%%%%%%%%%%%%%%%%%%%%%%%%%%%%%%%%%%%%%%%%%%%%%%%%%%%%%%%%%%%%%%%%%%%%%%%%%%%%%%%%%%%%%%%%%%%%%%%%%%%%%%%%%%%%%%
%%%%%%%%%%%%%%%%%%%%%%%%%%%%%%%%%%%%%%%%%%%%%%%%%%%%%%%%%%%%%%%%%%%%%%%%%%%%%%%%%%%%%%%%%%%%%%%%%%%%%%%%%%%%%%%%%%%%%%%%%%%%%%%%%%%%%%%%%%%%%%%%%%%%%%%%%%%%%%%
%%%%%%%%%%%%%%%%%%%%%%%%%%%%%%%%%%%%%%%%%%%%%%%%%%%%%%%%%%%%%%%%%%%%%%%%%%%%%%%%%%%%%%%%%%%%%%%%%%%%%%%%%%%%%%%%%%%%%%%%%%%%%%%%%%%%%%%%%%%%%%%%%%%%%%%%%%%%%%%

\section{Impossibility result} \label{sec:impossibility}

We have seen that both of our algorithms crucially rely on the expansion properties of the underlying network. In Theorem \ref{thm:impossibility}, we show that having sufficient expansion is necessary for obtaining \emph{any} approximation of $\log(n)$. In the proof, we make use of the fact that a single Byzantine node can trick the honest nodes into believing that there may be some large number of nodes hidden ``behind'' the Byzantine node, and the honest nodes have no way of verifying whether this bottleneck actually exists.

\begin{theorem} \label{thm:impossibility}
    There is no randomized algorithm that ensures that more than $\lceil\frac{n}{2}\rceil$ nodes output an approximation of $\log{n}$ in the presence of one Byzantine node with probability at least $(1 - \epsilon)$, if there are no restrictions on the network topology of the given $n$-node network, for any constant $0  <  \epsilon  <  1$.
\end{theorem}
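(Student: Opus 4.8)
The plan is to prove impossibility by an \emph{indistinguishability} (simulation) argument, exactly along the lines hinted at in the introduction: a single Byzantine node placed at a cut vertex can perfectly simulate the messages that would arrive from a large hidden portion of the network, so the honest nodes separated from the rest by that vertex cannot tell a genuinely large network apart from a small one with a fake hidden part. Since no restriction is placed on the topology, I am free to choose the networks. A first, naive attempt — a single honest core $A$ attached through one node either to a real large network or to a Byzantine node simulating one — does \emph{not} suffice, and noting this is important: to be correct in the small world $A$ (a strict majority there) must output a \emph{small} estimate, hence by indistinguishability it outputs small in the large world too; but a constant-factor change of $\log n$ forces a polynomial change of $n$, so in the large world $A$ is only a \emph{minority}, and the algorithm can simply let the fooled minority err while the genuinely large part answers correctly. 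Thus the fooled set must be made a \emph{majority of the world in which it errs}.

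To achieve this I would use a \emph{symmetric hub-of-cores} construction. The large world $G_L$ has $k$ identical honest cores $A_1,\dots,A_k$, each of $m$ nodes, all attached only through a single hub node $b$ that is the unique Byzantine node but behaves honestly, merely relaying; thus $n_L=km+1$ and $G_L$ is effectively an honest execution. For each $j$, the small world $S_j$ consists of the single core $A_j$ together with $b$, where now $b$ is Byzantine and simulates the other $k-1$ cores; thus $n_S=m+1$. The key lemma to prove is that the joint distribution of the transcript, and hence of the outputs, of the nodes of $A_j$ is \emph{identical} in $G_L$ and in $S_j$. The proof is a coupling: the only channel between $A_j$ and everything else is the edge to $b$; in $S_j$ the omniscient, computationally unbounded Byzantine $b$ runs the algorithm on the other cores internally, drawing their coins from the correct distribution and feeding in $A_j$'s messages exactly as the relay would, so round by round the messages crossing the $A_j$--$b$ cut have the same distribution as in $G_L$.

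Next I would fix the sizes so that valid outputs are \emph{disjoint}: if a valid approximation means $c_1\log n\le\mathcal L_u\le c_2\log n$ with $r:=c_2/c_1$, choose $k$ so that $\log n_L>r\,\log n_S$, which needs only $km+1>(m+1)^r$; then no output value is simultaneously valid for $n_S$ and for $n_L$. Now the counting argument. Assuming an algorithm that succeeds with probability $\ge 1-\epsilon$ on every network, success in $S_j$ forces a strict majority of $A_j$ to output an $n_S$-valid (hence $n_L$-\emph{invalid}) value with probability $\ge 1-\epsilon$ (the single node $b$ contributes at most one output, absorbed by the choice of $m$); by the lemma the same holds, for each $j$, inside $G_L$. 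A union bound gives that with probability $\ge 1-k\epsilon$ \emph{every} core has an $n_S$-valid majority, so more than half of the $km+1$ nodes of $G_L$ output $n_L$-invalid values, i.e.\ $G_L$ \emph{fails}. Comparing with the assumed success probability in $G_L$ yields $1-\epsilon\le k\epsilon$, i.e.\ $k\ge(1-\epsilon)/\epsilon$, which I contradict by choosing $k<(1-\epsilon)/\epsilon$ while keeping the size gap (taking $m$ a small constant, e.g.\ $m=2$ and $k=\Theta(3^{\,r})$, is feasible once $\epsilon$ is small relative to $r$).

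The step I expect to be the main obstacle is precisely this last quantitative balance — forcing the fooled set to be a \emph{majority in the world where it is wrong}. The simple bottleneck fails for the reason above, so the symmetric hub construction is essential, and then the union bound over the $k$ cores must be affordable, which pits the number of cores $k$ (large, needed to separate the two logarithms) against the tolerated failure probability $\epsilon$. I would therefore state and prove the theorem in the natural regime where $\epsilon$ is small enough relative to the claimed approximation ratio, which is the meaningful case of ruling out high-probability success, and treat the exact-indistinguishability point — that the unbounded, omniscient Byzantine hub can sample the simulated cores' randomness from the correct distribution — as the enabling technical device rather than a difficulty.
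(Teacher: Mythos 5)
Your construction and your indistinguishability argument are essentially the paper's: the paper also glues $t$ copies of a base graph $C_n$ at a single Byzantine hub that, in each copy, simulates an honest execution of the algorithm on $C_n$, so that no copy can distinguish the composite network $H$ from $C_n$ alone. Your diagnosis of why the naive one-bottleneck version fails (the fooled set must be a majority of the world in which it errs) is also correct and is exactly why the symmetric multi-copy construction is used. The gap is in your final probability accounting. You bound, per copy, the probability that the copy \emph{fails} to produce a small-world-valid majority by $\epsilon$ (the algorithm's error bound in $S_j$), and you need \emph{every} copy to succeed in the small-world sense in order to deny the large world its majority; the union bound then costs $k\epsilon$, and the argument only closes when $k\epsilon < 1-\epsilon$. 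Since disjointness of the two valid ranges forces $k \gtrsim (m+1)^{r}/m$, you can only handle $\epsilon$ below a threshold depending on the approximation ratio $r$ --- and not at all if the approximation factor is allowed to grow with $n$, which the paper's proof explicitly permits. The theorem, however, asserts impossibility for \emph{every} constant $0<\epsilon<1$, i.e., it rules out even success probability $0.01$; your proof, as you yourself note, establishes only a weakened statement.

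The missing idea is the paper's summability trick, which reverses the direction of the per-copy event. Because decisions are irrevocable, the events ``$A$ decides on a common estimate of $\log(nk)$ in $C_n$'' for well-separated values of $k$ are mutually exclusive, so their probabilities sum to at most $1$; hence there must exist some $t \ge n$ for which $\Pr[\text{$C_n$ produces a common estimate of }\log(nt)] \le \frac{1-\epsilon}{t}$. Taking $t$ copies, success on $H$ (which has $\approx nt$ nodes) requires some copies to produce a majority of $\log(nt)$-valid outputs, and each copy does so with probability at most $\frac{1-\epsilon}{t}$ by indistinguishability; the union bound over the copies now costs at most $1-\epsilon$ times a constant less than one, independently of how large $t$ has to be, which contradicts the assumed success probability $\ge 1-\epsilon$ for any $\epsilon < 1$. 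In short: rather than fixing the number of copies from the approximation ratio and paying $\epsilon$ per copy, the paper chooses the number of copies adaptively from the algorithm's own output distribution so that the per-copy cost shrinks as $1/t$. Incorporating that observation would let you upgrade your argument to the full statement; as written, your proposal proves a genuinely weaker theorem.
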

\begin{proof}
    For the sake of a contradiction, suppose that there is an algorithm $A$ that solves the counting problem and let $C_n$ be an arbitrary graph of size $n$. Fix any approximation factor $c  =  c(n)  >  0$, which includes $c$ being a constant as a special case. Suppose that an execution $\gamma$ of algorithm $A$ results in a set $S$ of more than $\lceil \frac{n}{2}\rceil$ nodes, where every $u_i \in S$ outputs an estimate $\hat{\ell_i}$ such that $\frac{\ell}{c} \le \hat{\ell_i} \le c \ell$, for some $\ell$. Then we say that \emph{$A$ decides on a common estimate of $\ell$ in execution $\gamma$}.

    For a given $n$, let $t \geq n$ be the smallest integer such that the probability of $A$ deciding on a common estimate of $\log{(nt)}$ when executing on network $C_n$ is at most $\frac{1-\epsilon}{t}$, where $\epsilon>0$ is the assumed constant error probability of the algorithm $A$. Since we assume that decisions are irrevocable (see Def.~\ref{definition-problem-definition}), we know that $A$ can decide at most one common estimate in a single execution, the event of producing common estimate $\ell_1$ and the event of producing common estimate $\ell_2$, where $c\ \ell_1 < \frac{\ell_2}{c}$, are mutually exclusive.
    If $t$ does not exist, then the algorithm has probability $>  \frac{1 - \epsilon}{k}$ to output $\log{(nk)}$ as the common estimate, for all $k \geq n$. However, by summing up the probabilities of these (mutually exclusive) events we get $\sum_{k = n}^{\infty}\frac{1-\epsilon}{k}  >  1$, i.e., the probabilities of outputting the common estimates do not form a valid probability distribution. It follows that $t$ exists.

    Now consider a graph $H$ of $t$ copies of $C_n$ where the Byzantine node $b$ is part of each copy, i.e., node $b$ has degree $t \cdot deg(b)$ where $deg(b)$ is the degree of $b$ in $C_n$. For each copy $C_n$, node $b$ outputs the same set of messages and local state transitions, as are required by an execution of algorithm $A$ in the network $C_n$ for some given random coin flips when $b$ is an honest node. For the algorithm to output a common estimate of $\log(nt)$ when executing on $H$, at least $>\frac{nt}{2}$ nodes need to output a common estimate of $\log{(nt)}$, which involves at least $\frac{n}{2}$ nodes in at least $\frac{t}{2}$ copies of $C_n$. Since the nodes in any given copy of $C_n$ cannot distinguish the execution in $C_n$ from the execution on $H$ and nodes in each individual $C_n$ have probability at most $\frac{1 - \epsilon}{t}$ of outputting the required approximation of $\log(nt)$, we can take a union bound over the $\frac{t}{2}$ copies of $C_n$. Thus, for the probability of the algorithm to produce a common estimate in $H$, we obtain $\frac{1 - \epsilon}{t}\frac{t}{2}  \leq  \frac{1 - \epsilon}{2}$, a contradiction to the assumed probability of success being $\geq  (1 - \epsilon)$.
\end{proof}
\section{Conclusion and Open Problems}

In this paper we take a step towards designing localized, secure, robust, and scalable distributed  algorithms for large-scale networks. We presented two distributed protocols for the fundamental Byzantine counting problem. Our work leaves many questions open. While our deterministic algorithm runs in optimal $O(\log{n})$ rounds, the randomized algorithm takes rounds that is essentially proportional to the number of Byzantine nodes in the network. Thus a main open problem would be to show a polylogarithmic round algorithm for Byzantine counting using small messages or to prove that this is not possible. Another open problem is to show a  algorithm that can tolerate a significantly larger number of Byzantine nodes, e.g., $\Theta(n)$ Byzantine nodes.

%%%%%%%%%%%%%%%%%%%%%%%%%%%%%%%%%%%%%%%%%%%%%%%%%%%%%%%%%%%%%%%%%%%%%%%%%%%%%%%%%%%%%%%%%%%%%%%%%%%%%%%%%%%%%%%%%%
%%%%%%%%%%%%%%%%%%%%%%%%%%%%%%%%%%%%%%%%%%%%%%%%%%%%%%%%%%%%%%%%%%%%%%%%%%%%%%%%%%%%%%%%%%%%%%%%%%%%%%%%%%%%%%%%%%
%%%%%%%%%%%%%%%%%%%%%%%%%%%%%%%%%%%%%%%%%%%%%%%%%%%%%%%%%%%%%%%%%%%%%%%%%%%%%%%%%%%%%%%%%%%%%%%%%%%%%%%%%%%%%%%%%%
%%%%%%%%%%%%%%%%%%%%%%%%%%%%%%%%%%%%%%%%%%%%%%%%%%%%%%%%%%%%%%%%%%%%%%%%%%%%%%%%%%%%%%%%%%%%%%%%%%%%%%%%%%%%%%%%%%
%%%%%%%%%%%%%%%%%%%%%%%%%%%%%%%%%%%%%%%%%%%%%%%%%%%%%%%%%%%%%%%%%%%%%%%%%%%%%%%%%%%%%%%%%%%%%%%%%%%%%%%%%%%%%%%%%%

\bibliographystyle{plainurl}
\bibliography{references_byzantine_counting_second_paper}

\appendix
\section{Expander Subgraph Lemma}

For completeness, we restate the following lemma; the original proof is in \cite{Augustine_2015_FOCS}. 

\begin{lemma}[cf.\ Lemma 3 in \cite{Augustine_2015_FOCS}] \label{lem:culling}
    Let $G$ be a $n$-node graph with expansion $\phi$ and constant node degree $d$ and suppose that all nodes in a set $F$ are removed from $G$, where $|F| = o(n)$. Then, for any positive constant $c < 1$, there exists a subgraph $H$ of $G$ such that
    \begin{enumerate}
        \item $H$ has expansion $c \phi$, and
        \item $|H|   \geq   n - |F|\left(1 + \frac{1}{\phi(1 - c)}\right)$. 
    \end{enumerate}
\end{lemma}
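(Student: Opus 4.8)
The plan is to construct $H$ by an iterative \emph{pruning} process that repeatedly discards poorly‑expanding vertex sets, and then to charge the total number of discarded vertices (beyond $F$) against $|F|$ using the vertex expansion of the original graph $G$. First I would set up the process. Let $A_0 = F$. Given the current removed set $A_i$, if the graph induced on $V\setminus A_i$ still contains a \emph{bad set} --- a nonempty $S$ with $|S|\le\frac{|V\setminus A_i|}{2}$ having fewer than $c\phi|S|$ neighbours in $(V\setminus A_i)\setminus S$ --- then remove one such set and set $A_{i+1}=A_i\cup S_i$; otherwise halt. Let $H=V\setminus A_T$ be the final remaining graph and $B=\bigcup_i S_i = A_T\setminus F$. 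The process terminates since each step deletes at least one vertex, and by the halting condition $H$ has vertex expansion $\ge c\phi$, which already establishes Part~(1).

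The heart of the matter is the bound $|B|\le\frac{|F|}{(1-c)\phi}$, which yields $|H| = n-|F|-|B| \ge n - |F|\big(1+\frac{1}{\phi(1-c)}\big)$ and hence Part~(2). Two facts drive it. A \textbf{per-step bound}: when $S_i$ is removed, $|S_i|\le n/2$ so expansion gives $|Out_G(S_i)|\ge \phi|S_i|$; splitting $Out_G(S_i)$ into the part inside $A_i$ and the part inside the remaining graph (the latter of size $<c\phi|S_i|$ by badness) gives $|Out_G(S_i)\cap A_i| > (1-c)\phi|S_i|$, and since $Out_G(S_i)\cap A_i\subseteq A_i$ we obtain $|S_i| < \frac{|A_i|}{(1-c)\phi}$. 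And an \textbf{aggregate bound}: every neighbour of some $S_i$ that lies in $H\subseteq V\setminus A_{i+1}$ is one of that set's fewer than $c\phi|S_i|$ remaining out‑neighbours, so by a union bound $|Out_G(B)\cap H| \le \sum_i c\phi|S_i| = c\phi|B|$. Provided $|B|\le n/2$, expansion applied to $B$ gives $\phi|B|\le |Out_G(B)| = |Out_G(B)\cap F| + |Out_G(B)\cap H| < |F| + c\phi|B|$ (using $Out_G(B)\subseteq F\cup H$), whence $(1-c)\phi|B| < |F|$, exactly as required.

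The one delicate point, and the step I expect to be the main obstacle, is discharging the hypothesis $|B|\le n/2$ needed to invoke expansion, since the definition guarantees nothing for larger sets. I would resolve this by an induction showing $|B_i|\defeq|A_i\setminus F| \le n/2$ throughout the process: the aggregate bound (valid whenever $|B_i|\le n/2$) forces $|B_i| = o(n)$ because $|F|=o(n)$, hence $|A_i| = o(n)$, and the per-step bound then makes the next increment $|S_i| < \frac{|A_i|}{(1-c)\phi} = o(n)$, so $|B_{i+1}|$ stays well below $n/2$ for all large $n$. Feeding the final $|B|\le n/2$ back into the aggregate bound yields the clean constant $|B| < \frac{|F|}{(1-c)\phi}$, completing the argument. (I note that this argument uses only the vertex expansion of $G$; the bounded‑degree assumption is not essential to it.)
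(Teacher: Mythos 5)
Your proof is correct and rests on the same engine as the paper's: the iterative pruning of poorly-expanding sets followed by charging the total pruned mass $B=\bigcup_i S_i$ against $|F|$ via the vertex expansion of $G$. The differences are in the two supporting steps. You derive the aggregate bound $|Out_G(B)\cap H| < c\phi|B|$ from scratch by a union bound over the badness conditions at the moments of removal, whereas the paper imports it as Lemma~\ref{lem:nbound}, cited as following ``analogously'' to a lemma of Bagchi et al.; your derivation is self-contained and cleaner. More substantively, you discharge the hypothesis $|B|\le n/2$ (needed to apply expansion to $B$ itself) by a forward induction: whenever $|A_i\setminus F|\le n/2$ the aggregate bound forces $|A_i\setminus F|=o(n)$, and your per-step bound $|S_i|<|A_i|/((1-c)\phi)$ then keeps the next accumulated set at $o(n)$, so the hypothesis is preserved for all large $n$. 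The paper instead argues by contradiction at the first index $\ell$ where the accumulated union exceeds $|F|/(\phi(1-c))$, splitting into the cases $|S|\le n/2$ and $|S|>n/2$; in the second case it shows the single set $S_\ell$ would have to have size $\Theta(n)$ while the $\Theta(n)$ drop in its neighborhood could only be explained by the $o(n)$ nodes removed before it. Both routes close the same gap --- yours maintains an invariant throughout the process, the paper's localizes the failure at a first crossing --- and your parenthetical remark that the bounded-degree assumption is not actually needed is consistent with the paper's argument, which likewise never uses $d$.
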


\begin{proof}
We adapt the proof of Theorem 2.1 in \cite{Bagchi_2006}. Let $G_F$ be the graph yielded by removing the set $F$ from $G$. Perform the following iterative procedure (cf.\ Algorithm {\sf Prune} in \cite{Bagchi_2006}):
\begin{enumerate}
    \item Initialize $G_0 = G_F$. 
    \item In iteration $i\ge 0$, let $S_i$ be any set of up to $|V(G_i)|/2$ nodes with expansion smaller than $c\phi$.
    \item If $S_i$ exists, prune $S_i$ from $G_i$, i.e., $G_{i+1}  = G_i \setminus S_i$.
    \item Let $H$ be graph that we get after the final iteration; note that $H$ has expansion $\ge c\phi$.
\end{enumerate}

We now lower bound the size of $H$. Suppose that the pruning procedure stops after $m$ iterations. For the sake of a contradiction, suppose that $$\frac{|F|}{\phi(1-c)} < \left|\bigcup_{i=0}^m S_i\right|.$$ Define the set $S = \bigcup_{i=0}^{\ell}S_i$ where $\ell$ is the smallest index such that exactly one of the following two cases holds:

First, assume that $|S| \le n/2$. Let $N_{G_i}(S_j)$ denote the neighbors in $G_i \setminus S_j$ of nodes in set $S_j$. We make use of the following result whose proof follows analogously to Lemma 2.6 of \cite{Bagchi_2006}:

\begin{lemma}[cf.\ Lemma 2.6 in \cite{Bagchi_2006}] \label{lem:nbound}
    Suppose that we execute procedure $\sf Prune(c)$. For all $j$ with $0 \leq j < m$, it holds that $| N_{G_F}\left(\bigcup_{i = 0}^j S_i \right) | \le c \phi | \bigcup_{i=0}^j S_i |.$ 
\end{lemma}

Since $G$ has expansion of $\phi$, it holds that $| N_G(S) | \ge \phi |S|$. On the other hand, Lemma \ref{lem:nbound} implies that $| N_{G_F}(S)| \le c\phi |S|$. Thus we get $|F| \ge \phi |S| - c\phi |S|  =  \phi(1 - c)|S|$ and hence $|S|  \leq  \frac{|F|}{\phi(1 - c)}$, yielding a contradiction.

Now, consider the case where $|S| > n/2$. Then, it follows that
\begin{equation} \label{eq:ssize}
    \left|\bigcup_{i = 0}^{\ell - 1}S_i\right| \leq \frac{|F|}{\phi(1 - c)},
\end{equation}
but $|S_\ell| > n/2 - \frac{|F|}{\phi(1-c)}$. (Note that if $S_\ell   \leq   \frac{n}{2} - \frac{|F|}{\phi(1 - c)}$, then $|S|  \leq  \frac{n}{2}$ and the first case applies.) Recalling that $F = o(n)$, it follows that $S_\ell \in \Theta(n)$. Since $S_\ell$ was removed when executing \textsf{Prune($c$)}, we know that $|N_{G_{\ell}}(S_\ell)| \le c\phi |S_\ell|$ and, by the expansion of $G$, we have $|N_G(S_\ell)| \ge \phi |S_\ell|$ as $S_\ell \le n/2$. Thus $$|N_G(S_\ell)| - |N_{G_{\ell}}(S_\ell)| \ge \phi(1 - c)|S_\ell| = \Theta(n)$$ We observe that the size of the neighborhood of $S_\ell$ must have been reduced by $\Theta(n)$ either due to the removal of nodes in $F$ or because of the pruning of the sets $S_0,\dots,S_{\ell-1}$. This, however, yields a contradiction, since
\begin{equation*}
    |F| + \bigcup_{i=0}^{\ell-1}S_i   =   O(|F|)   =   o(n)\text{.}
\end{equation*}
Thus we have shown that
\begin{equation*}
    \left|\bigcup_{i = 0}^m S_i\right| \leq \frac{|F|}{\phi(1 - c)}\text{,}
\end{equation*}
and hence
\begin{equation*}
    |H|   \geq   |G| - |F|(1 + \frac{1}{\phi(1 - c)})\text{,}
\end{equation*}
which completes the proof.
\end{proof}

\end{document}